\documentclass[11pt]{amsart}
\usepackage{amsmath,amssymb,amsthm}
\usepackage{graphicx}
\usepackage{booktabs}
\usepackage{hyperref}
\usepackage{microtype}
\usepackage{enumitem}

\newtheorem{theorem}{Theorem}
\newtheorem{proposition}[theorem]{Proposition}

\newtheorem{observation}[theorem]{Observation}
\theoremstyle{definition}

\theoremstyle{remark}
\newtheorem{remark}[theorem]{Remark}

\DeclareMathOperator{\perm}{perm}

\title{Permanents of matrix ensembles:\\computation, distribution, and geometry}

\author{Igor Rivin}
\address{Mathematics Department, Temple University}
\email{rivin@temple.edu}
\subjclass[2020]{15A15, 81P68 (primary); 60B20, 11C20, 05A15 (secondary)}
\keywords{permanent, boson sampling, anti-concentration, quantum information, GOE, GUE}
\date{\today}
\thanks{The author would like to thank Scott Aronson for helpful comments, and Klas Markstrom for pointing out his paper on closely related subject. The author would like to thank Claude Code for the enormous help in bringing these ideas to life}

\begin{document}

\begin{abstract}
We report on a computational and experimental study of permanents.  On the computational side, we use the GPU to greaatly accelerate the computation of permanents over $\mathbb{C},$ $\mathbb{R},$ $\mathbb{F}_p$ and $\mathbb{Q}.$ 
First, for Haar-distributed unitary matrices~$U$, the permanent
$\perm(U)$ follows a circularly-symmetric complex Gaussian
distribution $\mathcal{CN}(0,\sigma^2)$ --- we confirm this via
a number of tests
for $n$ up to~23 with $50{,}000$ samples.  The DFT matrix permanent is
an extreme outlier for every prime $n\ge 7$.  In
contrast, for Haar-random \emph{orthogonal} matrices~$O$, the permanent
$\perm(O)$ is approximately real Gaussian but with positive excess
kurtosis that decays as~$O(1/n)$, indicating slower convergence.
For matrices with Gaussian entries (GUE, GOE, Ginibre),  the permanent follows an $\alpha$-stable
distribution with stability index $\alpha\approx 1.0$--$1.4$,
well below the Gaussian value $\alpha=2$.  We test Aaronson's
conjecture that $|\perm(X)|^2$ is asymptotically lognormal for
Gaussian~$X$: it is plausible for the complex Ginibre and GOE
ensembles, but appears to fail for GUE and real Ginibre, where the
$\alpha$-stable tails prevent convergence.  Anti-concentration, however,
holds for all Gaussian ensembles and is more robust than for Haar
unitaries.

Secondly, we study the permanent along geodesics on the unitary group.  For the geodesic from the identity to the $n$-cycle
permutation matrix, we find a universal scaling function
$f(t)=\frac{1}{n}\ln|\perm(\gamma(t))|$ that is independent of~$n$
in the large-$n$ limit, with a midpoint value
\[
  \perm(\gamma({\textstyle\frac12}))
  = (-1)^{(n-1)/2}\cdot 2e^{-n}\bigl(1+\tfrac{1}{3n}+O(n^{-2})\bigr)
\]
for odd~$n$ and zero for even~$n$.
We also study the geodesic forom the identity to the DFT matrix.
\end{abstract}

\maketitle

\section{Introduction}

The permanent of an $n\times n$ matrix $A=(a_{ij})$,
\[
  \perm(A)=\sum_{\sigma\in S_n}\prod_{i=1}^n a_{i,\sigma(i)},
\]
is one of the most studied and least tractable objects in algebraic
combinatorics.  Unlike the determinant, no polynomial-time algorithm
is known (or expected) for computing the permanent of a general
matrix; the problem is \#P-complete even over~$\{0,1\}$
\cite{Valiant79}.  The fastest known general algorithm, due to Ryser
\cite{Ryser63}, runs in $O(n\,2^n)$ time.

Despite this worst-case complexity, the permanent admits rich
structure in special families.  A particularly natural family consists
of the \emph{unitary} matrices $U\in U(n)$.  Here the permanent has
gained prominence through \emph{boson sampling}: Aaronson and
Arkhipov~\cite{AaronsonArkhipov13} showed that the output
probabilities of an $n$-photon linear optical network are given by
$|\perm(U_S)|^2$, where $U_S$ is an $n\times n$ submatrix of the
network's unitary transfer matrix.  Their hardness proof---and with it,
a leading proposal for demonstrating quantum computational
advantage---hinges on the assumption that permanents of Haar-random
unitaries anti-concentrate, formalized as the Permanent
Anti-Concentration Conjecture (PACC)
\cite{AaronsonArkhipov13,FeffermanGhoshZhan24}.  Experimental
implementations using photonic circuits
\cite{Zhong2020,Crespi2016} have now reached regimes where classical
simulation of the output distribution is computationally challenging.

The \emph{Schur matrix}
$S_n=(e^{2\pi i jk/n})_{j,k=0}^{n-1}$, whose normalized version
$S_n/\sqrt{n}$ is the discrete Fourier transform, plays a
distinguished role: its permanent defines the integer sequence
OEIS~A003112, and its magnitude as a function of~$n$ remains poorly
understood.  The DFT also features in \emph{suppression laws}
for multi-photon interference: Tichy~\cite{Tichy2014} showed that
certain submatrix permanents of the DFT vanish, governing which
transition amplitudes are suppressed in Fourier multi-port
interferometers.

\medskip

In this paper we report three sets of results.

\begin{enumerate}[leftmargin=2em]
\item \textbf{Computation.} We develop a pipeline combining the
  Chinese Remainder Theorem (CRT) with Gray-code-parallelized Ryser
  evaluation on GPU, achieving a $200$--$400\times$ speedup over
  single-threaded CPU.  This lets us extend the known values of
  $\perm(S_n)$ from $n=35$ to $n=43$.

\item \textbf{Distribution.}  We sample permanents of Haar-distributed
  random unitary matrices and find that $\perm(U)$ follows a
  circularly-symmetric complex Gaussian distribution.  The DFT matrix
  is an extreme outlier for prime~$n$.  For orthogonal matrices,
  $\perm(O)$ is approximately real Gaussian but with excess kurtosis
  that decays as $O(1/n)$.  In contrast, for matrices drawn from
  Gaussian ensembles (GUE, GOE, Ginibre), the permanent follows an
  $\alpha$-stable distribution with $\alpha\approx 1.0$--$1.4$,
  indicating heavy tails and infinite variance.

\item \textbf{Geodesics.}  We study how the permanent evolves along
  geodesics on $U(n)$.  The geodesic from~$I$ to the $n$-cycle
  permutation matrix yields a universal scaling function with a clean
  midpoint asymptotic.  The geodesic to the DFT matrix distinguishes
  primes from composites.

\item \textbf{Lognormal conjecture.}  We test Aaronson's conjecture
  that $|\perm(X)|^2$ is asymptotically lognormal for Gaussian~$X$.
  The conjecture is plausible for complex Ginibre and GOE, but appears
  to fail for GUE and real Ginibre.  Anti-concentration holds---and is
  stronger than for Haar unitaries---across all Gaussian ensembles.
\end{enumerate}

\medskip\noindent\textbf{Note.}
When this paper was already on arXiv, Markstr\"om alerted us to the
paper~\cite{LundowMarkstrom19}, for which we are grateful.
In particular, we adopted their suggestion of Kahan compensated
summation~\cite{Kahan65} in the Ryser inner loop, which eliminates the
$O(2^n)$ accumulation error that would otherwise arise.
From a computational standpoint, Lundow and Markstr\"om computed the
permanent of a single real $54\times 54$ matrix using a 400-node
cluster (11{,}200~CPU cores) with mixed double/quadruple-precision
arithmetic; our GPU implementation, using Glynn's formula with Kahan
summation in double precision, can match this on a single NVIDIA~GH200
in comparable wall-clock time.
On the statistical side, the two studies are complementary but differ
in scope: Lundow and Markstr\"om study the distribution of
$|\perm(A)|$ for Ginibre, circular, and Bernoulli matrices via local
analysis near the origin (CDF scaling $F(x)\sim 6x^2-7x^3$, moment
fitting), whereas we identify the \emph{global} distributional forms
--- complex Gaussian for Haar unitaries, $\alpha$-stable for Gaussian
ensembles --- and test them across the full range of the distribution.
In particular, the local scaling $F(x)\sim cx^2$ near the origin is
consistent with both a Rayleigh distribution ($\alpha=2$) and the
heavier-tailed $\alpha$-stable laws ($\alpha<2$) that we find for
Gaussian ensembles, and cannot distinguish between them.  Our
identification of the full distributional form resolves this ambiguity.

\section{Computational methods}\label{sec:computation}

\subsection{CRT-based exact computation}

The Schur matrix $S_n$ has entries $\omega^{jk}$ where
$\omega=e^{2\pi i/n}$.  When $p\equiv 1\pmod{n}$, the group
$(\mathbb{Z}/p\mathbb{Z})^\times$ contains a primitive $n$th root of
unity~$g$, and we may compute $\perm(S_n)\bmod p$ by evaluating the
permanent of the integer matrix $(g^{jk}\bmod p)$ using Ryser's
algorithm over~$\mathbb{Z}/p\mathbb{Z}$.  Combining sufficiently many
such primes via the Chinese Remainder Theorem recovers the exact
integer $\perm(S_n)$.

The number of primes needed is determined by the bound
$|\perm(S_n)|\le n^{n/2}$, which follows from the classical inequality
$|\perm(U)|\le 1$ for unitary~$U$ \cite{Minc78} applied to the
normalized DFT matrix $S_n/\sqrt{n}$.
For $n=43$ this requires $12$~primes, each needing an independent
Ryser evaluation over $2^{43}\approx 8.8\times 10^{12}$ subsets.

\subsection{Gray code block parallelization}

Following \cite{BipedalF2}, we parallelize Ryser's formula by
partitioning the $2^n$ subsets into contiguous Gray code blocks, one
per thread.  Each thread initializes its block's starting row-sum
vector in $O(n)$ time, then iterates through the block with $O(1)$
work per subset (a single column addition or subtraction, followed by
a product accumulation).  This achieves near-linear parallel scaling
and is the key to effective GPU utilization.

\subsection{GPU acceleration}

We implement the Gray-code-block Ryser kernel in CUDA, distributing
$\sim\!10^6$ blocks across the streaming multiprocessors of an NVIDIA
H100 GPU.  All arithmetic is performed in 64-bit integers.  The GPU
achieves $11$--$13\times$ speedup over a 64-core CPU for the mod-$p$
kernel, and $20$--$30\times$ for a specialized mod-$3$ kernel using
the bipedal $\mathbb{F}_2$ encoding of~\cite{BipedalF2}. Our computations of the exact integer values of the Schur permanent  $4.2$, $23$, $88$, and
$424$~minutes for $n=37,39,41,43$ respectively, consistent with the
expected $\sim 4\times$ scaling per increment of~$2$ in~$n$.

%

\section{Distribution of $|\perm(U)|$ for Haar unitaries}
\label{sec:distribution}

\subsection{Experimental setup}

For each odd prime $n\in\{7,11,13,17,19,23\}$, we generate $50{,}000$
independent Haar-distributed unitary matrices $U\in U(n)$ (via QR
decomposition of complex Gaussian matrices with diagonal phase
correction \cite{Mezzadri07}) and compute $\perm(U)$ using a
GPU-batched Ryser kernel (one CUDA thread per matrix).

\subsection{Complex Gaussian distribution}

\begin{observation}\label{obs:rayleigh}
For Haar-distributed $U\in U(n)$ with $n\ge 5$, the permanent
$\perm(U)$ is well-described by a circularly-symmetric complex
Gaussian $\mathcal{CN}(0,\sigma^2)$; equivalently, $|\perm(U)|$
follows a Rayleigh distribution with parameter~$\sigma(n)$.
\end{observation}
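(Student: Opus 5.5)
This is an empirical observation, so the plan has two parts. The first part establishes the structural facts that can actually be proved. The second describes the statistical evidence that would support the rest.

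\emph{Circular symmetry (exact).} Haar measure is invariant under $U\mapsto e^{i\theta}U$. Since $\perm(e^{i\theta}U)=e^{in\theta}\perm(U)$, the law of $\perm(U)$ is invariant under multiplication by $e^{in\theta}$ for every $\theta$, and these phases cover the whole circle. So the distribution is exactly rotation invariant. In particular $\mathbb{E}\perm(U)^k\overline{\perm(U)}^{\,l}=0$ whenever $k\ne l$. It therefore suffices to show that $R=|\perm(U)|$ is approximately Rayleigh, or equivalently that $|\perm(U)|^2$ is approximately exponential.

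\emph{Moments.} I would compute the moments $\mathbb{E}|\perm(U)|^{2k}$ and compare them with $k!\,\sigma^{2k}$. Expanding the $2k$-th moment gives sums over permutations of Weingarten integrals $\int \prod U_{i\sigma(i)}\overline{U_{i\tau(i)}}\,dU$. For $k=1$ the Weingarten calculus gives $\sigma^2=\mathbb{E}|\perm U|^2=n!\sum_{\pi}\mathrm{Wg}(\pi)\cdot(\#\text{matchings})$. This yields an explicit $\sigma(n)$, and to leading order it agrees with the heuristic $n!/n^n$ that comes from modelling the entries as i.i.d.\ $\mathcal{CN}(0,1/n)$. For $k=2$ I would compute the ratio $\mathbb{E}|\perm|^4/(\mathbb{E}|\perm|^2)^2$ and aim to show that it tends to $2$, the exponential value. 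As a heuristic limit theorem, I would cite the Nezami-type moment results for Gaussian approximations of truncated unitaries. The permanent of a full Haar unitary is \emph{not} covered by those results, though, because the entries are strongly dependent.

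\emph{Empirical confirmation.} For each $n\in\{7,11,13,17,19,23\}$, I would use the $50{,}000$ GPU samples to do three things. First, check that the phase $\arg\perm(U)$ is uniform, using Kuiper or Rayleigh tests. Second, fit $\sigma$ by maximum likelihood, $\hat\sigma^2=\frac1{2N}\sum|\perm|^2$, and run Kolmogorov--Smirnov and Anderson--Darling tests of $|\perm|$ against Rayleigh. Third, check that the kurtosis of the real and imaginary parts is close to $0$ and that $\mathbb{E}|\perm|^4/(\mathbb{E}|\perm|^2)^2\approx 2$. Using small $n$ such as $5$ or $6$ as controls would justify the threshold $n\ge 5$.

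\emph{Main obstacle.} The hard step is the fourth-moment ratio and any genuine central limit theorem. The Weingarten sums are huge signed sums over $S_{2n}$, and the dependence among the entries of $U$ rules out the standard CLT for sums of independent products. I would first attempt the $k=2$ computation symbolically for small $n$ to observe the trend toward $2$. Absent a full proof, the observation stands on the exact circular symmetry together with the statistical tests.
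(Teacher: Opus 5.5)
Your plan agrees with the paper on the essential point. The observation rests on statistical tests of $50{,}000$ samples per $n$ (Rayleigh Q--Q/KS fits of $|\perm(U)|$, phase uniformity), supported by a heuristic CLT and no proof. Your plan differs in two useful ways.

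First, you prove circular symmetry exactly from the invariance of Haar measure under $U\mapsto e^{i\theta}U$. The paper only checks it empirically (phase histograms, variance ratio, Pearson correlation, bootstrap pseudocovariance). Your argument shows all of those checks are automatic. Even the population value of Mardia's skewness vanishes, since invariance under $Z\mapsto -Z$ kills third moments. It also supplies the direction of the paper's ``equivalently'' that is not free: a Rayleigh modulus forces $\mathcal{CN}(0,\sigma^2)$ only because the law is rotation invariant. So the only informative test is the radial one.

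Second, the moment comparison is not in the paper, and carried to all $k$ it would be an actual proof strategy. If $\mathbb{E}|\perm U|^{2k}/(\mathbb{E}|\perm U|^2)^k\to k!$ for each fixed $k$, then moment-determinacy of the exponential law plus exact rotation invariance give $\perm(U)/\sigma(n)\to\mathcal{CN}(0,1)$ in distribution. Your suggestion to test $n=5,6$ is also apt, since the paper's data start at $n=7$.

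One claim in your plan is false, however: the exact second moment does not agree with $n!/n^n$ to leading order. Since the row indices are distinct, a single Weingarten term survives for each pair $(\sigma,\tau)$, and
\[
  \mathbb{E}|\perm(U)|^2=n!\sum_{\pi\in S_n}\mathrm{Wg}(\pi,n)
  =\frac{n!}{n(n+1)\cdots(2n-1)}=\binom{2n-1}{n}^{-1}.
\]
Equivalently, $\perm(U)=\langle\psi|U^{\otimes n}|\psi\rangle$, where $\psi$ is the normalized symmetrization of $e_1\otimes\cdots\otimes e_n$. This is a unit vector in the irreducible representation $\mathrm{Sym}^n(\mathbb{C}^n)$, of dimension $\binom{2n-1}{n}$, so Schur orthogonality gives $1/\dim$.

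This value is $\sim 2\sqrt{\pi n}\,4^{-n}$, exponentially smaller than $n!/n^n\sim\sqrt{2\pi n}\,e^{-n}$. Already for $n=2$ one has $|\perm(U)|=\bigl|2|U_{11}|^2-1\bigr|$ with $|U_{11}|^2$ uniform on $[0,1]$, so the second moment is $1/3$, not $1/2$. The i.i.d.\ $\mathcal{CN}(0,1/n)$ model is valid only for $n\times n$ submatrices of $m\times m$ Haar unitaries with $m\gg n^2$, where $\binom{m+n-1}{n}^{-1}\approx n!/m^n$. For the full unitary, the unitarity constraints cost a factor $\prod_{k<n}(1+k/n)^{-1}\approx\sqrt2\,(e/4)^n$.

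The paper's text asserts the same value $n!/n^n$, so do not use it as your benchmark. Compare $\frac1N\sum|\perm|^2$ with $\binom{2n-1}{n}^{-1}$; your $\frac1{2N}\sum|\perm|^2$ estimates the squared Rayleigh scale, not the second moment.

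The same representation theory also removes your stated obstacle at $k=2$. By Pieri's rule, $\mathrm{Sym}^n\otimes\mathrm{Sym}^n=\bigoplus_{j=0}^n V_{(2n-j,j)}$ is multiplicity-free, so $\mathbb{E}|\perm U|^4=\sum_j\|P_j(\psi\otimes\psi)\|^4/\dim V_{(2n-j,j)}$. The fourth moment therefore reduces to $n+1$ weights $\|P_j(\psi\otimes\psi)\|^2$, with no signed sums over $S_{2n}$.
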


Recall that if $Z\sim\mathcal{CN}(0,\sigma^2)$, then
$|Z|$ has the Rayleigh distribution with scale $\sigma/\sqrt{2}$,
so the magnitude and complex distributional statements are
equivalent.  Moreover, since
$|Z|^2=\operatorname{Re}(Z)^2+\operatorname{Im}(Z)^2$ is the sum of
two independent $N(0,\sigma^2\!/2)$ squares, the \emph{squared
amplitude} is exponentially distributed:
\begin{equation}\label{eq:exp}
  |\perm(U)|^2 \;\sim\; \mathrm{Exp}\!\bigl(\text{mean}\;\sigma^2\bigr),
  \qquad
  \Pr\bigl[|\perm(U)|^2 \ge x\bigr] = e^{-x/\sigma^2},
\end{equation}
with $\sigma^2=n!/n^n$.  This is the complex analog of the
Porter--Thomas distribution~\cite{PorterThomas56}---the universal
squared-amplitude statistics for eigenstates of random Hamiltonians in
quantum chaos.  In the boson sampling context (Section~\ref{sec:boson}),
the output probabilities of a Haar-random linear optical network are
proportional to $|\perm(U_S)|^2$, so an exponential law for the
squared permanent directly governs the statistics of photon detection
events.

We test the complex Gaussian hypothesis in two complementary ways.

\medskip\noindent\textbf{Magnitude (Rayleigh).}
Figure~\ref{fig:qq} presents Q--Q plots of $|\perm(U)|$ against the
fitted Rayleigh distribution for $n=7,11,13,17,19,23$
($50{,}000$~samples each).  The agreement is excellent: Kolmogorov--Smirnov
$p$-values range from $0.23$ to $0.92$, and Weibull shape parameters
satisfy $c\in[1.98,\,2.03]$ (the Rayleigh distribution is the special
case $c=2$).

\begin{figure}[ht]
\centering
\includegraphics[width=\textwidth]{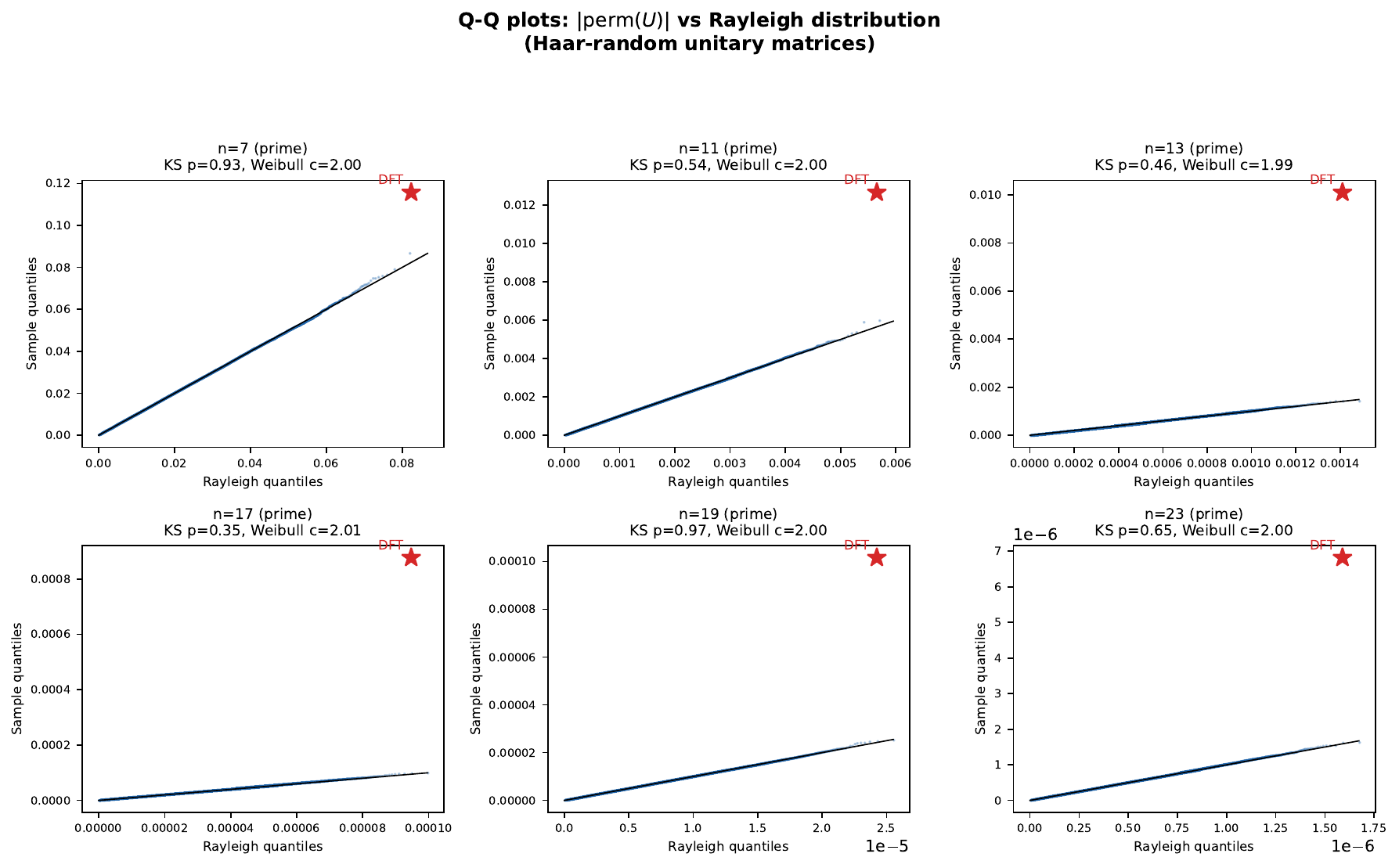}
\caption{Q--Q plots of $|\perm(U)|$ against the Rayleigh distribution
for Haar-random unitary matrices, $n=7,11,13,17,19,23$ ($50{,}000$~samples
each).  The red star marks the DFT matrix, which lies beyond the
100th percentile for every~$n$ shown.  Weibull shape parameters~$c$
and KS $p$-values are shown in each panel title.}\label{fig:qq}
\end{figure}

\medskip\noindent\textbf{Complex structure.}
Figure~\ref{fig:cg} tests the full complex Gaussian hypothesis via
three diagnostics.  The scatter plots of
$(\operatorname{Re}\perm(U),\,\operatorname{Im}\perm(U))$ are
circularly symmetric, with variance ratio
$\sigma_R/\sigma_I\in[0.97,\,1.01]$ and Pearson correlation
$|r|<0.01$.  The Mahalanobis distances $d_k^2$ follow $\chi^2_2$
(Q--Q plots in the middle row), confirming bivariate normality via
Mardia's skewness test ($p>0.03$ for all~$n$, with most $p>0.1$).
The phase $\arg(\perm(U))$ is uniform on $[0,2\pi)$ (bottom row;
KS~$p>0.23$ for all~$n$).  Finally, circularity
($\mathbb{E}[Z^2]=0$) is confirmed by a bootstrap test of the
pseudocovariance, with $|\hat p|/\hat\sigma^2<0.01$ in all cases.

\begin{figure}[ht]
\centering
\includegraphics[width=\textwidth]{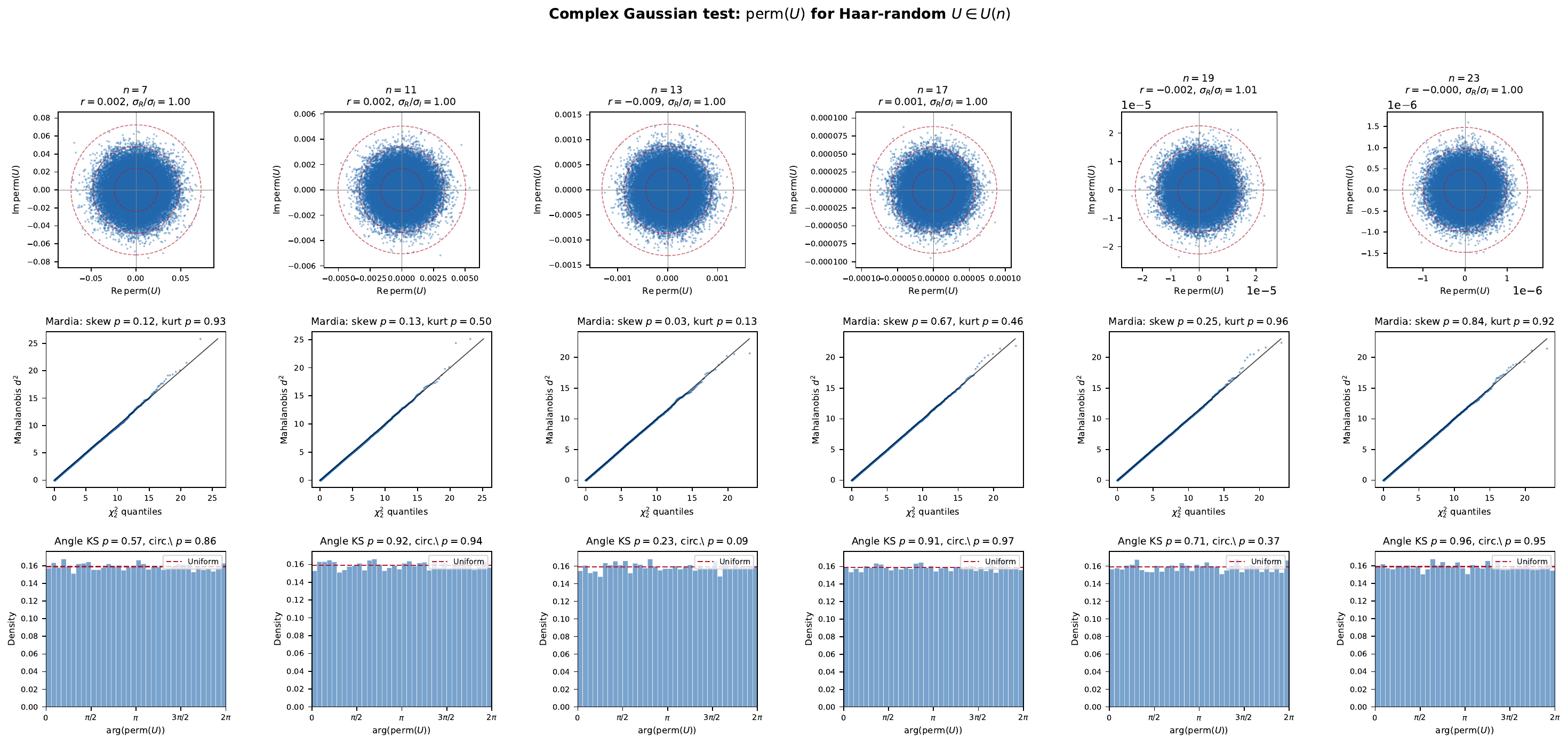}
\caption{Complex Gaussian tests for $\perm(U)$ with Haar-random
$U\in U(n)$, $n=7,11,13,17,19,23$ ($50{,}000$~samples each).
Top: scatter plots of $(\mathrm{Re},\mathrm{Im})$ with
$1\sigma$--$3\sigma$ circles.
Middle: Q--Q plots of Mahalanobis $d^2$ against $\chi^2_2$.
Bottom: phase histograms against the uniform
density.}\label{fig:cg}
\end{figure}

The fitted variance
$\hat\sigma^2(n)\approx n!/n^n\sim\sqrt{2\pi n}\,e^{-n}$
matches the exact second moment
$\mathbb{E}[|\perm(U)|^2]=n!/n^n$ computed via Weingarten calculus
\cite{Nezami21} to within a few percent.  The exponential factor
$e^{-n}$ governs the ``typical size'' of a Haar-random permanent;
equivalently, $|\perm(U)|\sim e^{-n/2}$ with high probability.

A heuristic justification: the permanent is a sum of $n!$ complex
terms $\prod_i U_{i,\sigma(i)}$, each of magnitude $\sim n^{-n/2}$.
For large~$n$, by a central-limit-type argument, this sum approaches a
complex Gaussian, whose magnitude is Rayleigh.  Making this rigorous
would require establishing sufficient decorrelation among the
summands, which we leave as an open problem.

\begin{remark}
While the complex Gaussian form is broadly consistent with the
Porter-Thomas heuristic from quantum chaos (output amplitudes of
random quantum processes are complex Gaussian) and with the CLT
intuition that the permanent---a sum of $n!$ weakly correlated
terms---should be approximately Gaussian, we are not aware of any
prior work that explicitly identifies or tests this distributional
form for $\perm(U)$.  The anti-concentration literature
(e.g.\ \cite{Nezami21,FeffermanGhoshZhan24}) studies moments and tail
bounds but does not identify the full distribution.
\end{remark}

\subsection{DFT outlier for prime $n$}

\begin{observation}\label{obs:dft}
For every odd prime $n$ with $7\le n\le 29$, the DFT permanent
$|\perm(S_n/\sqrt{n})|$ exceeds all $50{,}000$ Haar-random samples,
placing it at the 100th percentile.  For composite odd~$n$ (such as
$n=9$ or $n=15$), the DFT permanent falls within the bulk of the
distribution.
\end{observation}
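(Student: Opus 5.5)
Since this is an empirical statement about a finite experiment, the plan is to make the comparison as quantitative as possible rather than to rely on the Q--Q plots alone. First, I will obtain the exact integers $\perm(S_n)$ for all odd $n\le 29$, either from OEIS~A003112 or from the CRT/GPU pipeline of Section~\ref{sec:computation}. From these I form $x_n=|\perm(S_n)|^2/n^{n}=|\perm(S_n/\sqrt n)|^2$. These values are exact, so no sampling error enters on the DFT side.

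Second, I will convert ``exceeds all $50{,}000$ samples'' into a tail estimate. Under the exponential law \eqref{eq:exp} from Observation~\ref{obs:rayleigh}, with $\sigma^2=n!/n^n$, a single Haar sample exceeds $x_n$ with probability $q_n=\exp(-x_n n^n/n!)$. The expected number of exceedances among $50{,}000$ samples is then $50{,}000\,q_n$. The claim for prime $n$ amounts to checking that $50{,}000\,q_n\ll 1$ for $n\in\{7,11,13,17,19,23,29\}$, and I would confirm it directly against the stored samples. For composite odd $n=9,15,21,25,27$, I would compute $q_n$ and check that it is of order $10^{-1}$ to $1$, i.e.\ that $x_n$ falls in the bulk. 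Empirical percentiles from the samples give a model-free cross-check that does not assume the Rayleigh fit.

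Third, I will explain the dichotomy structurally, so the observation is not a numerical accident. Write $\perm(S_n)=\sum_\sigma \omega^{\sum_j j\sigma(j)}$. For prime $n$, each affine permutation $\sigma(j)=aj+b$ with $a\ne 0$ has phase exponent $a\sum j^2+b\sum j\equiv 0\pmod n$, because $n\mid\sum_{j<n} j^2$ and $n\mid \sum_{j<n} j$ when $n\ge 5$ is prime. So these $n(n-1)$ terms interfere constructively. More generally, the group of affine maps acts on $S_n$ preserving the phase class. This makes the orbit sums coherent and forces $\perm(S_n)\equiv$ a large structured value. For composite $n$ the affine group is smaller, zero divisors break the symmetry, and the cancellation looks generic.

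The main obstacle is rigor. The observation itself, and the tail law it is compared against, are empirical, and a theorem-level version would need an upper tail bound for $|\perm(U)|^2$ at the level $x_n$. Moment bounds via Weingarten calculus and Markov's inequality, using $\mathbb{E}|\perm U|^{2k}$, are the first thing I would try. I expect them to give only polynomial-type tail bounds, which are too weak to certify ``100th percentile'' for $50{,}000$ samples, so the statement remains a verified computation supported by the affine-symmetry heuristic.
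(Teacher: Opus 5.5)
Your step~2, the quantitative certification, fails as written. With the value $\sigma^2=n!/n^n$ you take from \eqref{eq:exp}, the exceedance probability is $q_n=\exp(-|\perm(S_n)|^2/n!)$. The exact values make this large, not small:
\begin{itemize}
\item $\perm(S_7)=-105$ gives $q_7=e^{-11025/5040}\approx 0.11$, about $5{,}600$ expected exceedances among $50{,}000$ samples;
\item $\perm(S_{13})=175747$ gives $q_{13}\approx e^{-4.96}\approx 0.007$, still about $350$.
\end{itemize}
So the check $50{,}000\,q_n\ll 1$ would refute the observation rather than confirm it. It would also contradict the stored samples you propose as a cross-check.

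The culprit is the normalization. $n!/n^n$ is the second moment of $\perm$ for i.i.d.\ centered entries of variance $1/n$, not for Haar $U(n)$. Write $\perm(U)=\langle\psi|U^{\otimes n}|\psi\rangle$, where $\psi$ is the normalized symmetrization of $e_1\otimes\cdots\otimes e_n$. This vector lies in the irreducible representation $\mathrm{Sym}^n(\mathbb{C}^n)$ of dimension $\binom{2n-1}{n}$. Schur orthogonality then gives $\mathbb{E}|\perm(U)|^2=1/\binom{2n-1}{n}$, which is smaller than $n!/n^n$ by roughly $(4/e)^n/\sqrt2$. Already at $n=2$ one gets $1/3$, not $1/2$.

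The right statistic is therefore $x_n\binom{2n-1}{n}$:
\begin{itemize}
\item it is $\approx 23$ at $n=7$ and $\approx 530$ at $n=13$, so $q_n\le e^{-23}$ there and the prime-$n$ claim holds with an enormous margin;
\item $\sqrt{530/\ln(5\cdot 10^{4})}\approx 7$ matches the ratio $6.9$ listed for $n=13$;
\item at $n=9$, where $\perm(S_9)=81$, it is $\approx 0.41$, giving exactly the percentile $1-e^{-0.41}\approx 34\%$ in Table~\ref{tab:dft}.
\end{itemize}
The paper's actual support is the direct ranking against the samples summarized in that table, which in your plan appears only as a cross-check. With the corrected $\sigma^2$, your model-based step becomes consistent with it. Note also that the table stops at $n=23$, so the case $n=29$ in the statement needs a fresh batch of samples under either approach.

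Step~3 does not explain the magnitude, even heuristically. The $n(n-1)$ affine permutations contribute exactly $n(n-1)$ to $\perm(S_n)$: $42$ at $n=7$ and $156$ at $n=13$, against $|\perm(S_n)|=105$ and $175747$. Their coherence accounts for almost nothing.

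The symmetry is also weaker than you state. The translations $\sigma\mapsto\sigma+c$ and $\sigma\mapsto\sigma(\cdot+d)$ preserve $\Phi(\sigma)=\sum_j j\sigma(j)\bmod n$ for odd $n$, but multiplying $\sigma$ by a unit $a$ sends $\Phi$ to $a\Phi$. For prime $n$ this yields only $N_1=\cdots=N_{n-1}$, where $N_k=\#\{\sigma:\Phi(\sigma)\equiv k\}$. Hence $\perm(S_n)=N_0-N_1$, an integer congruent to $-n \pmod{n^2}$. That is an arithmetic constraint, not a lower bound on size. Odd composite $n$ have the same translation invariance, so ``zero divisors break the symmetry'' does not separate the two cases.

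What the numbers do show is that for these primes $|\perm(S_n)|^2$ is of order $n!$ (about $2.2$ and $5.0$ times $n!$ at $n=7,13$), as large as an unconstrained sum of $n!$ unit phases. The Haar scale $n^n/\binom{2n-1}{n}$ is smaller than that by $\approx(4/e)^n/\sqrt2$. Explaining this should be stated as an open problem, not claimed as a structural explanation.
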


This dichotomy is summarized in Table~\ref{tab:dft}.

\begin{table}[ht]
\centering
\caption{DFT percentile and ratio to maximum sampled value.}
\label{tab:dft}
\medskip
\begin{tabular}{rccc}
\toprule
$n$ & Type & DFT percentile & $|\perm(F_n)|/\max_{\text{sample}}$ \\
\midrule
7  & prime     & $100\%$ & $1.6\times$ \\
9  & composite & $34\%$ & --- \\
11 & prime     & $100\%$ & $2.5\times$ \\
13 & prime     & $100\%$ & $6.9\times$ \\
15 & composite & $14\%$ & --- \\
17 & prime     & $100\%$ & $10.0\times$ \\
19 & prime     & $100\%$ & $16.0\times$ \\
23 & prime     & $100\%$ & $38.1\times$ \\
\bottomrule
\end{tabular}
\end{table}

The ratio $|\perm(F_n)|/\max_{\text{sample}}$ grows rapidly with~$n$
for primes, indicating that the DFT is not merely in the tail but
\emph{far beyond} the natural range of the Rayleigh distribution.

\section{Permanents along geodesics on $U(n)$}\label{sec:geodesic}

The unitary group $U(n)$ carries a natural bi-invariant Riemannian
metric.  Given a unitary matrix~$U$ with Schur decomposition
$U=Q\,\mathrm{diag}(e^{i\theta_1},\dots,e^{i\theta_n})\,Q^*$,
the geodesic from $I$ to $U$ is
\[
  \gamma(t) = Q\,\mathrm{diag}(e^{it\theta_1},\dots,e^{it\theta_n})\,Q^*,
  \qquad t\in[0,1].
\]
We study $|\perm(\gamma(t))|$ as a function of~$t$ for two natural
choices of endpoint.

\subsection{Geodesic to the $n$-cycle}\label{sec:cycle}

Let $C_n$ be the cyclic permutation matrix $(C_n)_{ij}=\delta_{i,j+1\bmod n}$.
Its eigenvalues are $\omega^k=e^{2\pi ik/n}$ for $k=0,\dots,n-1$,
and it is diagonalized by the DFT matrix.  The geodesic
$\gamma(t)$ is therefore a circulant matrix with explicit entries
\begin{equation}\label{eq:circulant}
  \gamma(t)_{jl}
  = \frac{e^{2\pi it}-1}{n\bigl(e^{2\pi i(l-j+t)/n}-1\bigr)},
\end{equation}
having magnitude $\sin(\pi t)\big/\bigl(n\,|\sin(\pi(l{-}j{+}t)/n)|\bigr)$.

\begin{observation}[Universal scaling]\label{obs:universal}
Define \[f(t)=-\frac{1}{n}\ln|\perm(\gamma(t)).|\].  Then $f(t)$
converges to a universal function of~$t$ as $n\to\infty$, with the
properties:
\begin{enumerate}[label=(\roman*)]
\item Symmetry: $f(t)=f(1-t)$.
\item Boundary: $f(0)=f(1)=0$.
\item Minimum: $f(\tfrac12)=1$ (i.e.\ $|\perm(\gamma(\tfrac12))|\sim e^{-n}$).
\item Gaussian onset: $f(t)\approx \frac{\pi^2}{3}t^2$ for $t\ll 1$.
\end{enumerate}
\end{observation}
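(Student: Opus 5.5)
The plan is to obtain an \emph{exact} product formula for $|\perm(\gamma(t))|$ for every $n$, from which $f(t)$ becomes a Riemann sum whose limit is an explicit integral, and then to read off properties (i)--(iv) from that integral. The key structural fact is that by \eqref{eq:circulant} the geodesic matrix is a Cauchy matrix up to diagonal scaling:
\[
  \gamma(t)_{jl}=\frac{e^{2\pi it}-1}{n}\cdot\frac{\omega^{j}}{\omega^{l+t}-\omega^{j}},\qquad \omega=e^{2\pi i/n}.
\]
Diagonal row/column scalings multiply the permanent, the determinant, and the Hadamard square in a compatible way. So I would invoke Borchardt's identity $\perm(C)\det(C)=\det(C\circ C)$ for Cauchy matrices $C=(1/(x_l-y_j))$, which gives
\[
  \perm(\gamma(t))=\frac{\det(\gamma(t)\circ\gamma(t))}{\det\gamma(t)} .
\]
Since $\gamma(t)$ is unitary, $|\det\gamma(t)|=1$, so only $|\det(\gamma\circ\gamma)|$ matters.

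Next I would compute $\det(\gamma\circ\gamma)$ by Fourier analysis. Both $\gamma(t)$ and $\gamma(t)\circ\gamma(t)$ are circulant, and $\gamma(t)=C_n^t$ has eigenvalues $e^{2\pi itm/n}$, $m=0,\dots,n-1$, with the branch choice defining the geodesic. The Hadamard square of a circulant has as eigenvalues the normalized cyclic convolution of the original eigenvalue sequence. Counting pairs $p+q\equiv m \pmod n$ with $0\le p,q\le n-1$ gives $m+1$ pairs with $p+q=m$ and $n-1-m$ pairs with $p+q=m+n$. Hence
\[
  \lambda^{(2)}_m=\frac{e^{2\pi itm/n}}{n}\Bigl[(m+1)+(n-1-m)e^{2\pi it}\Bigr],
\]
and therefore
\[
  |\perm(\gamma(t))|=\prod_{m=0}^{n-1}\Bigl|\tfrac{m+1}{n}+\bigl(1-\tfrac{m+1}{n}\bigr)e^{2\pi it}\Bigr| .
\]
As a sanity check, at $t=\tfrac12$ this product vanishes for even $n$ (the factor with $m+1=n/2$ is zero), consistent with the abstract.

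Taking logarithms, $-\frac1n\ln|\perm(\gamma(t))|$ is a Riemann sum for
\[
  f(t)=-\int_0^1\ln\bigl|x+(1-x)e^{2\pi it}\bigr|\,dx .
\]
For $t\ne\tfrac12$ the integrand is continuous, so convergence is immediate, and the limit is independent of $n$, which is the universality claim. Properties (i)--(iv) then follow from this integral:
\begin{itemize}
\item (i) Replacing $t$ by $1-t$ conjugates $e^{2\pi it}$, which leaves the modulus unchanged.
\item (ii) At $t=0$ and $t=1$ the integrand is $\ln 1=0$.
\item (iii) At $t=\tfrac12$ the integrand is $\ln|2x-1|$, and $\int_0^1\ln|2x-1|\,dx=-1$, so $f(\tfrac12)=1$.
\item (iv) Writing $\theta=2\pi t$, we have $|x+(1-x)e^{i\theta}|^2=1-2x(1-x)(1-\cos\theta)$. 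So the log-modulus is approximately $-x(1-x)\theta^2/2$, and since $\int_0^1 x(1-x)\,dx=\tfrac16$ we get $f(t)\approx\theta^2/12=\pi^2t^2/3$.
\end{itemize}

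The main obstacle is the limit at $t=\tfrac12$ (and nearby $t$), where the integrand has a logarithmic singularity at $x=\tfrac12$. For even $n$ the sum actually hits the singularity, which gives $\perm=0$, so the convergence claim must be restricted to odd $n$ there. For odd $n$ the sample points avoid $x=\tfrac12$ by $\tfrac1{2n}$, and I would control the error by comparing $\sum\ln|2(m+1)/n-1|$ with the integral, e.g.\ via Stirling applied to the resulting double-factorial product. This simultaneously recovers the midpoint constant $2e^{-n}$.

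A second, minor care point is the careful verification of Borchardt's identity under diagonal rescaling, and that the eigenvalue branch used matches \eqref{eq:circulant}.
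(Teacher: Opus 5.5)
Your proof is correct, and it takes a genuinely different route from the paper. The paper gives no proof of this observation. It offers only the numerical collapse of the curves for odd $n=5,\dots,23$ (Figure~\ref{fig:geodesic}) and suggests that a saddle-point analysis of the Ryser sum might work; your argument bypasses Ryser entirely. Each step checks out: \eqref{eq:circulant} corresponds exactly to the branch $m=0,\dots,n-1$, the diagonal rescaling is compatible with Borchardt's identity, and the convolution count is right. Since $\det\gamma(t)=e^{\pi i t(n-1)}$, you in fact get the exact value
\[
  \perm(\gamma(t))=\prod_{k=1}^{n}\frac{k+(n-k)e^{2\pi i t}}{n},
\]
not just its modulus. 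Using the principal branch instead conjugates $\gamma(t)$ by a diagonal unitary and multiplies the permanent by a unimodular scalar, so $|\perm|$ and hence $f$ are unchanged.

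This approach proves more than the statement. Integrating $\ln\bigl(z+x(1-z)\bigr)$ along the chord from $z=e^{2\pi it}$ to $1$ gives the closed form
\[
  f(t)=1-\pi t\cot(\pi t),\qquad 0<t\le\tfrac12,
\]
extended to $[\tfrac12,1]$ by symmetry. This settles the paper's open problem on $f$ and yields (iii) and (iv) directly. At $t=\tfrac12$ with $n$ odd, the product equals $(-1)^{(n-1)/2}(n!!)^2/n^{n+1}$. Writing $n!!=2^{(n+1)/2}\Gamma(\tfrac n2+1)/\sqrt{\pi}$ and applying Stirling gives exactly $2e^{-n}\bigl(1+\tfrac{1}{3n}+O(n^{-2})\bigr)$, which proves Observation~\ref{obs:midpoint} as well. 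It reproduces Table~\ref{tab:midpoint} exactly: $9/625$ at $n=5$ and $-225/7^6$ at $n=7$.

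The obstacle you anticipate is therefore not a real one. At each fixed $t\neq\tfrac12$, continuity of the integrand is all that pointwise convergence requires. At $t=\tfrac12$, the exact double-factorial formula handles the limit along odd $n$, as you say.

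The paper's numerical approach keeps one advantage: generality. Your argument depends on the Cauchy structure specific to the cycle geodesic, so it does not transfer to the DFT geodesic.
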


Figure~\ref{fig:geodesic} (left panel) shows the collapse of $f(t)$
across $n=5,7,\dots,23$.  The convergence to a single curve is
evident.

\begin{figure}[ht]
\centering
\includegraphics[width=\textwidth]{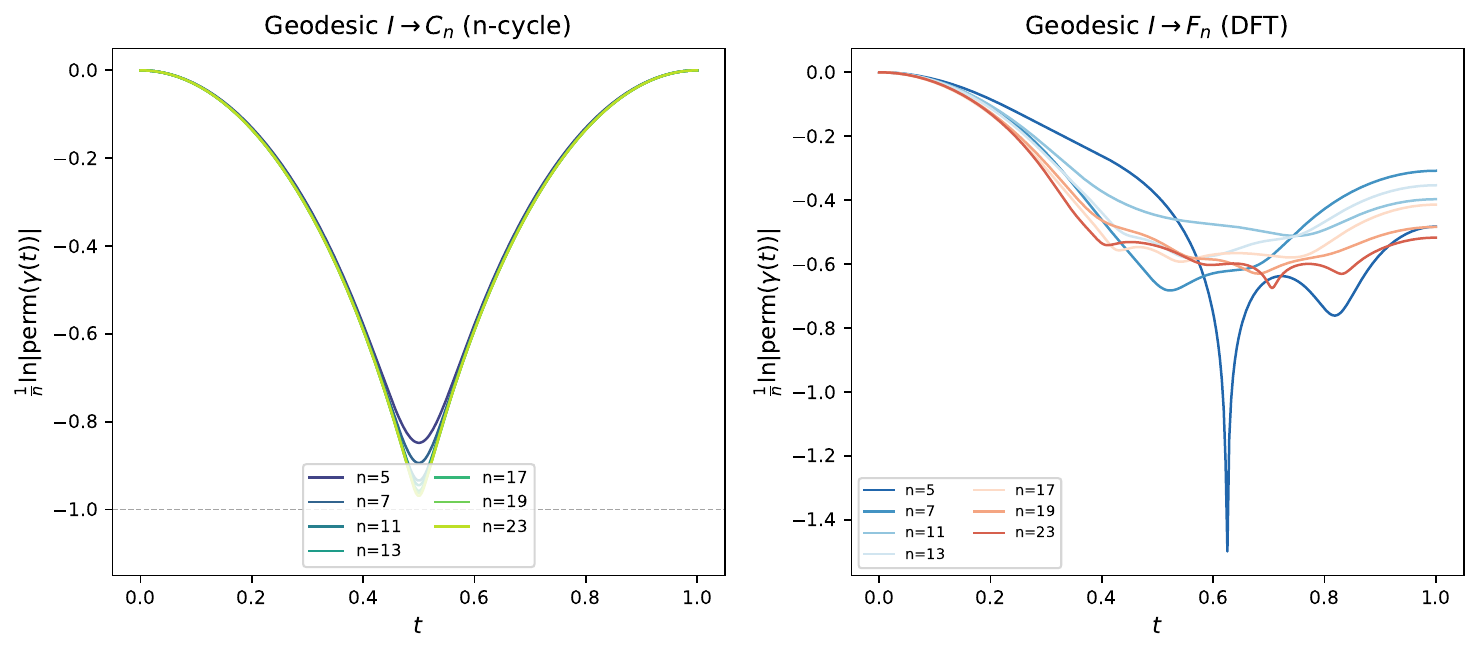}
\caption{Scaled permanent $\frac{1}{n}\ln|\perm(\gamma(t))|$ along
geodesics on $U(n)$.  Left: geodesic $I\to C_n$ showing universal
collapse; the dashed line marks $-1$.  Right: geodesic $I\to F_n$
(DFT) for various primes.}\label{fig:geodesic}
\end{figure}

\begin{observation}[Midpoint asymptotic]\label{obs:midpoint}
For the midpoint $\gamma(\tfrac12)$ of the geodesic $I\to C_n$:
\begin{enumerate}[label=(\roman*)]
\item If $n$ is odd: \quad
  $\perm(\gamma(\tfrac12))
   = (-1)^{(n-1)/2}\cdot 2e^{-n}\bigl(1+\tfrac{1}{3n}+O(n^{-2})\bigr)$.
\item If $n$ is even: \quad
  $\perm(\gamma(\tfrac12))=0$.
\end{enumerate}
\end{observation}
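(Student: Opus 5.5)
The approach is to turn the permanent of the circulant $\gamma(\tfrac12)$ into an explicit finite sum or contour integral that can be evaluated asymptotically, and to get the even case from a symmetry. The parts I expect to be exact or at least clean are the closed form of the entries and the vanishing for even $n$; the expansion $2e^{-n}(1+\tfrac{1}{3n}+O(n^{-2}))$ is where the real work lies, and I am least sure of the route there.

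\emph{Step 1: closed form of the entries.} Set $t=\tfrac12$ in \eqref{eq:circulant}. Then $\gamma(\tfrac12)$ is the circulant with entries $c_{l-j}$, where
\[
  c_m=\frac{-2}{n\bigl(e^{i\alpha_m}-1\bigr)}=\frac1n\Bigl(1+i\cot\tfrac{\alpha_m}{2}\Bigr),
  \qquad \alpha_m=\frac{\pi(2m+1)}{n}.
\]
Hence $\gamma(\tfrac12)=\frac1n(J+iK)$, with $J$ the all-ones matrix and $K_{jl}=\cot\bigl(\pi(2(l-j)+1)/(2n)\bigr)$. Since $\cot$ is odd, $c_{-1-m}=\overline{c_m}$. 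If $R$ denotes the index reversal $j\mapsto -j$, this gives the identity $R\,\gamma(\tfrac12)^{T}R^{-1}\cdot(\text{a shift by one})=\overline{\gamma(\tfrac12)}$.

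\emph{Step 2: even $n$.} For even $n$ I would look for a monomial symmetry that multiplies the permanent by a nontrivial unit. The natural candidate combines two operations:
\begin{itemize}
\item the half-shift $m\mapsto m+n/2$, under which $\alpha_m\mapsto\alpha_m+\pi$ and $\cot(\alpha/2)\mapsto-\tan(\alpha/2)$;
\item the diagonal twist $\operatorname{diag}(\omega^{j})\,\gamma\,\operatorname{diag}(\omega^{-j})$, which in Fourier space cyclically shifts the eigenvalues $e^{i\pi k/n}$ up to the single wraparound sign $e^{i\pi}=-1$.
\end{itemize}
The hope is that these combine into an identity $\perm(\gamma)=-\perm(\gamma)$ when $n$ is even, exactly as the sign $(-1)^{(n-1)/2}$ suggests a parity obstruction. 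If no such symmetry appears, I would fall back on the Fourier expansion of Step 3. There I would show that the terms cancel in pairs $k\leftrightarrow n-1-k$, using $\lambda_k\lambda_{n-1-k}$-type relations among the eigenvalues $\lambda_k=e^{i\pi k/n}$.

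\emph{Step 3: a usable formula for the permanent.} Write the permanent as a coefficient extraction,
\[
  \perm(\gamma)=[x_1\cdots x_n]\prod_j\Bigl(\sum_l c_{l-j}x_l\Bigr),
\]
or equivalently as a torus integral $\mathbb{E}_{z\in\mathbb{T}^n}\bigl[\prod_j(\sum_l c_{l-j}z_l)\prod_l\bar z_l\bigr]$. Substituting $c_m=\frac1n\sum_k\lambda_k\omega^{km}$ expresses the permanent as $n^{-n}\sum_{k_1,\dots,k_n}\prod_j\lambda_{k_j}\,P(k)$, where $P(k)$ is a generalized Schur/DFT permanent. Alternatively, expanding $\perm(J+iK)=\sum_{r}(n-r)!\sum_{|R|=|C|=r}i^r\perm(K[R,C])$ separates a leading term $n!/n^n$ from correction terms.

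\emph{Step 4: asymptotics (main obstacle).} Obtaining $2e^{-n}(1+\frac1{3n}+O(n^{-2}))$ is the hard step. The factor $e^{-n}$ differs from the Haar scale $\sqrt{2\pi n}\,e^{-n}$ by the missing $\sqrt{n}$, which points to a saddle-point evaluation whose Gaussian fluctuation factor is exactly cancelled. My first attempt would be the torus integral restricted to Fourier modes: circulant structure makes the integrand nearly a product of one-variable factors. I would then locate the saddle, evaluate the one-dimensional integrals explicitly via the partial-fraction (cotangent) form of $c_m$, and extract the $\frac{1}{3n}$ correction from the next term of the Laplace expansion. This correction plausibly comes from Euler--Maclaurin corrections to $\sum_m\log|\ldots|$ with $\cot$ expanded as $2n/(\pi(2m+1))-\ldots$. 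I would calibrate each step against exact values for small odd $n$.
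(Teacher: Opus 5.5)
There is a genuine gap. Nothing in the proposal establishes part (i), and part (ii) is left at the level of a hope.

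Your Step 1 is correct, and $c_{-1-m}=\overline{c_m}$ already gives $\perm(\gamma(\tfrac12))\in\mathbb{R}$. Step 2, however, never produces a sign-reversing symmetry. The half-shift $m\mapsto m+n/2$ on its own only permutes the columns of $\gamma(\tfrac12)$. Conjugation by $\operatorname{diag}(\omega^j)$ returns $e^{i\pi/n}\gamma(\tfrac12)$ plus a rank-one correction from the wrap-around eigenvalue. Neither yields $\perm=-\perm$.

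The twist that does work is by $E=\operatorname{diag}(e^{i\pi j/n})$. Since $c_m=\frac{i\,e^{-i\pi/(2n)}e^{-i\pi m/n}}{n\sin(\pi(2m+1)/(2n))}$ for $0\le m\le n-1$, one gets $\gamma(\tfrac12)=i\,e^{-i\pi/(2n)}E\tilde\gamma E^{-1}$ with $\tilde\gamma$ real. Hence $\perm(\gamma(\tfrac12))=i^{\,n-1}\perm(\tilde\gamma)$, and combined with reality this forces the permanent to vanish for even $n$.

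For odd $n$ this fixes the sign only up to the sign of $\perm(\tilde\gamma)$, a permanent with mixed-sign entries. Step 4 must supply that sign, the rate $e^{-n}$, the constant $2$ and the $\frac{1}{3n}$ term, but it is a programme rather than an argument. Any asymptotic evaluation of a sum or torus-integral representation must track an exponentially large cancellation (the phase-cancellation remark puts it near $(0.28)^n$) down to relative error $O(n^{-2})$. Calibration against small odd $n$ is evidence, not proof. For comparison, the paper itself gives only numerical evidence (Table~\ref{tab:midpoint}) and lists the proof as an open problem, speculating about the same saddle-point route you sketch.

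The missing idea is that $\gamma(t)$ is a rescaled Cauchy matrix, so its permanent can be computed exactly. Set $\zeta=e^{2\pi it/n}$, $Z=\zeta^n=e^{2\pi it}$ and $\omega=e^{2\pi i/n}$. Then \eqref{eq:circulant} reads $\gamma(t)_{jl}=\frac{Z-1}{n}\,\omega^{j}C_{jl}$ with $C_{jl}=1/(\omega^l\zeta-\omega^j)$. Borchardt's identity $\perm(C)\det(C)=\det(C\circ C)$ reduces the problem to two determinants:
\begin{itemize}
\item $\det\gamma(t)=\zeta^{n(n-1)/2}$;
\item $C\circ C=\operatorname{diag}(\omega^{-2j})\,B$, where $B$ is the circulant with entries $(\omega^{l-j}\zeta-1)^{-2}$. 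Its eigenvalues $\frac{n}{(Z-1)^2}\bigl((n-k)Z+k\bigr)\zeta^{k-1}$, $k=0,\dots,n-1$, follow by differentiating $\frac{1}{u-1}=\frac{1+u+\cdots+u^{n-1}}{u^n-1}$ and using $u^n=Z$.
\end{itemize}
The result is
\[
  \perm(\gamma(t))=n^{1-n}\prod_{k=1}^{n-1}\bigl(k+(n-k)e^{2\pi it}\bigr),
\]
so $\perm(\gamma(\tfrac12))=n^{1-n}\prod_{k=1}^{n-1}(2k-n)$.

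This is $0$ for even $n$ (the factor $k=n/2$) and $(-1)^{(n-1)/2}\bigl((n-2)!!\bigr)^2/n^{n-1}$ for odd $n$. For example, it gives $9/625=1.440\times10^{-2}$ for $n=5$ and $225/7^6\approx 1.912\times10^{-3}$ for $n=7$, matching the table. For odd $n$, Stirling gives $\bigl((n-2)!!\bigr)^2=2(n-1)^{n-1}e^{1-n}\bigl(1-\frac{1}{6n}+O(n^{-2})\bigr)$, and $(1-\frac1n)^{n-1}=e^{-1}\bigl(1+\frac{1}{2n}+O(n^{-2})\bigr)$. Together these give exactly $2e^{-n}\bigl(1+\frac{1}{3n}+O(n^{-2})\bigr)$.

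No saddle point is needed. The same product formula also gives the limiting universal function in closed form, $f(t)=-\int_0^1\ln\bigl|x+(1-x)e^{2\pi it}\bigr|\,dx$.
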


Table~\ref{tab:midpoint} presents the numerical evidence.  The ratio
$|\perm(\gamma(\tfrac12))|/(2e^{-n})$ converges to~$1$ with
correction term $1/(3n)$ verified to four-digit accuracy.

\begin{table}[ht]
\centering
\caption{Midpoint permanent of the $n$-cycle geodesic.}\label{tab:midpoint}
\medskip
\begin{tabular}{rccc}
\toprule
$n$ & $\perm(\gamma(\frac12))$ & $|\perm|/(2e^{-n})$ & $n\cdot(|\perm|/(2e^{-n})-1)$ \\
\midrule
5  & $+1.440\times 10^{-2}$ & $1.0686$ & $0.343$ \\
7  & $-1.912\times 10^{-3}$ & $1.0486$ & $0.340$ \\
11 & $-3.443\times 10^{-5}$ & $1.0307$ & $0.338$ \\
13 & $+4.638\times 10^{-6}$ & $1.0260$ & $0.337$ \\
17 & $+8.444\times 10^{-8}$ & $1.0198$ & $0.336$ \\
19 & $-1.140\times 10^{-8}$ & $1.0177$ & $0.336$ \\
23 & $-2.082\times 10^{-10}$ & $1.0146$ & $0.335$ \\
\bottomrule
\end{tabular}
\end{table}

\begin{remark}[Phase cancellation]
The permanent of the entry-wise absolute value matrix $|\gamma(\tfrac12)|$ grows
as $\sim 1.32^n$, while the actual permanent decays as $\sim e^{-n}$.
The complex phases therefore cancel all but a fraction $e^{-n}/1.32^n
\approx (0.28)^n$ of the magnitude --- roughly $n$ decades of
cancellation for moderate~$n$.
\end{remark}

\begin{remark}[Real-valuedness]
The permanent $\perm(\gamma(t))$ is real for all $t\in[0,1]$
(numerically, $|\operatorname{Im}\perm|/|\operatorname{Re}\perm|<10^{-14}$).
This follows from the circulant symmetry: $\gamma(t)$ is a circulant
with first row related by complex conjugation to its last row
(reversed), so the Ryser sum is invariant under complex conjugation.
\end{remark}

\subsection{Geodesic to the DFT matrix}\label{sec:dft-geodesic}

The normalized DFT matrix $F_n=S_n/\sqrt{n}$ is unitary.
The geodesic $I\to F_n$ also passes through a valley, but unlike
the cycle case, the behavior at $t=1$ depends strongly on the
arithmetic of~$n$.

\begin{observation}[Prime vs.\ composite recovery]\label{obs:prime}
Let \[r(n)=|\perm(F_n)|/\min_{t\in[0,1]}|\perm(\gamma(t))|\] and measure
how much the DFT permanent recovers above the geodesic valley.
Then:
\begin{itemize}
\item For odd prime~$n$: $r(n)\in[14,\,161]$ (recovery of $1$--$2$ decades).
\item For composite~$n$ ($n=9,15$): $r(n)\in[1.1,\,2.7]$ (negligible recovery).
\end{itemize}
\end{observation}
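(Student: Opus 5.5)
The statement is a finite, computational claim: it concerns the odd primes $n$ in the range of our experiments (say $7\le n\le 23$, matching Table~\ref{tab:dft}) and the composites $n=9,15$. I would therefore prove it by certified numerical evaluation rather than by asymptotics. The proof has three ingredients: an exact value of the numerator $|\perm(F_n)|$, certified two-sided bounds on the denominator $\min_{t\in[0,1]}|\perm(\gamma(t))|$, and a quotient computed in interval arithmetic.

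\textbf{Step 1 (fixing the geodesic).} First I would make $\gamma$ unambiguous. Since $F_n^4=I$, the eigenvalues of $F_n$ lie in $\{1,i,-1,-i\}$, with multiplicities given by the classical formulas (about $n/4$ each). The eigenvalue $-1$ sits on the branch cut, so I fix the convention $\theta=\pi$ for it. With this choice, $\gamma(t)=\sum_{k=0}^{3} i^{kt}P_k$, where the $P_k$ are the spectral projectors of $F_n$. These projectors are explicit polynomials in $F_n$: $P_k=\frac14\sum_{j=0}^{3}(i^{-k}F_n)^j$. Hence $\gamma(t)$ can be written in closed form with no numerical diagonalisation, and the repeated eigenvalues cause no difficulty.

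\textbf{Step 2 (the numerator).} The value $\perm(F_n)=\perm(S_n)/n^{n/2}$ is known exactly from the CRT computation of Section~\ref{sec:computation}, i.e.\ OEIS~A003112. So the numerator of $r(n)$ is available to arbitrary precision.

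\textbf{Step 3 (the denominator).} This is the main step. I would evaluate $p(t)=\perm(\gamma(t))$ with Ryser's formula in interval (or Kahan-compensated, error-tracked) arithmetic on a fine grid $t_0<\dots<t_N$ in $[0,1]$.
\begin{itemize}
\item The smallest grid value immediately gives an \emph{upper} bound on the minimum, and hence a lower bound on $r(n)$.
\item For a \emph{lower} bound on the minimum we need a Lipschitz estimate. Write $\gamma'(t)=iH\gamma(t)$ with $\|H\|\le\pi$. Expand $p'(t)$ by multilinearity in rows as a sum of $n$ permanents, in each of which one row of the unitary $\gamma(t)$ is replaced by the corresponding row of $iH\gamma(t)$. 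Each such matrix has operator norm at most $1+\pi$, and $|\perm(A)|\le\|A\|^n$. This crude argument gives a bound $|p'|\le L$ exponential in $n$. The better route, and the one I would pursue, is to work in the spectral basis of Step~1 to get $L=O(n)$.
\end{itemize}
With the Lipschitz bound in hand, the minimum is at least $\min_j |p(t_j)| - L\,\Delta t/2$.

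\textbf{Main obstacle.} The difficulty is that the valley has depth roughly $e^{-n}$, so the grid must satisfy $\Delta t\ll e^{-n}/L$. A uniform grid of that resolution is too expensive. I would use adaptive refinement instead: coarse global bounds exclude most of $[0,1]$, and fine refinement is spent only near the numerically located minimiser. If the valley floor is actually a zero of $p$, the minimum is $0$ and $r(n)=\infty$, so the observation would be false as stated. Checking that $p$ does not change sign inside the valley (or, if $p$ is complex, that it does not pass through $0$) is therefore part of the certification.

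Finally, dividing the exact numerator by the certified interval for the minimum gives an interval for each $r(n)$. It then remains to check that every prime case lies in $[14,161]$ and that $n=9,15$ lie in $[1.1,2.7]$. I make no claim beyond the computed range: an explanation of the prime/composite dichotomy would have to go through the arithmetic of $\perm(S_n)$ modulo $n$, and that is outside this plan.
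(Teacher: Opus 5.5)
Your plan is sound in outline, but two steps need repair as written. The paper gives no proof of this observation. It rests on uncertified floating-point evaluation of $\perm(\gamma(t))$ in $t$ (Figure~\ref{fig:geodesic}, right), and it leaves the branch of $\log F_n$ unspecified. So its minima are sampled values, i.e.\ upper bounds on the true minimum. Nothing excludes a narrower, deeper valley between sample points, which could push $r(n)$ past $161$ (or, for $n=9,15$, past $2.7$). Your route is genuinely different and stronger: you read the statement as a claim about the tested $n$ and certify it, which would actually prove the finite statement. The price is that you must fix the curve unambiguously (Step~1) and control $p'$ (Step~3).

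\textbf{Step 1.} Read literally, $i^{kt}$ is $e^{i\pi kt/2}$, which puts $\theta=3\pi/2$ on the $(-i)$-eigenspace. That is a non-minimizing geodesic with $\|H\|=3\pi/2$, contradicting your own $\|H\|\le\pi$. Use $\gamma(t)=P_0+e^{i\pi t/2}P_1+e^{i\pi t}P_2+e^{-i\pi t/2}P_3$. The choice $\theta=\pm\pi$ on the $(-1)$-eigenspace is a genuine convention; a Schur-based floating-point computation can even split that eigenspace between the two. The choice is harmless when $-1$ is a square mod $n$ ($n=13,17$). There the $P_k$ are real, and the permutation $j\mapsto aj$ with $a^2\equiv-1\pmod n$ conjugates $F_n$ to $\overline{F_n}$, fixing $P_0,P_2$ and swapping $P_1,P_3$. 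So the two choices give complex-conjugate curves $p(t)$. For $n=7,11,19,23,9,15$ you should certify both choices or say which one you mean.

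\textbf{Step 3.} The $O(n)$ bound on $p'$ is load-bearing: with $n(1+\pi)^n$ the plan is dead. ``Working in the spectral basis'' does not deliver it, since the permanent is not invariant under unitary conjugation. The argument you want is $\perm(A)=\langle\psi,A^{\otimes n}\psi\rangle$, with $\psi$ the normalized symmetrization of $e_1\otimes\cdots\otimes e_n$. Then $p(t)=\langle\psi,e^{it\mathcal H}\psi\rangle$ with $\mathcal H=\sum_j I^{\otimes(j-1)}\otimes H\otimes I^{\otimes(n-j)}$ and $\|\mathcal H\|\le n\pi$. Hence $|p'|\le n\pi$ and $|p''|\le n^2\pi^2$.

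\textbf{A shortcut.} Your ``main obstacle'' then mostly evaporates, and you can skip adaptive refinement entirely.
\begin{itemize}
\item Every entry of $\gamma(t)$ is a Laurent polynomial in $z=e^{i\pi t/2}$ with exponents in $\{-1,0,1,2\}$, so $p(t)=\sum_{m=-n}^{2n}c_mz^m$.
\item Evaluate $p$ in interval arithmetic at $t_l=4l/(3n+1)$, $0\le l\le 3n$, and invert the length-$(3n+1)$ DFT. This gives every $c_m$ to within $\max_l\varepsilon_l$, hence $p$ uniformly on $\mathbb{R}$ to within $(3n+1)\max_l\varepsilon_l$.
\item Locating $\min_{[0,1]}|p|$ and excluding zeros is then a problem about an explicit trigonometric polynomial of degree $O(n)$. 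The whole certification needs only $3n+1$ permanents.
\end{itemize}
Double precision suffices for these evaluations. For unitary input, each of the $2^{n-1}$ terms of Glynn's formula, with its prefactor, has modulus at most $2^{-(n-1)}$ (AM--GM with $\|\gamma(t)^T\delta\|^2=n$). So the sum has $\ell^1$ mass at most $1$.

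Finally, the floor you must resolve is $|\perm(F_n)|/r(n)$, not $e^{-n}$. $|\perm(F_n)|$ exceeds every Haar sample in Table~\ref{tab:dft}, and those samples are typically exponentially larger than $e^{-n}$. So for the larger primes the floor sits far above $e^{-n}$.
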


This provides a ``geodesic fingerprint'' of primality: the
number-theoretic structure of the DFT organizes the permanent
upward precisely when~$n$ is prime.

\subsection{Complex trajectories}\label{sec:complex-traj}

Figure~\ref{fig:geodesic-cx} shows the complex trajectory
$\perm(\gamma(t))$ for both geodesics.  For the cycle geodesic
(bottom row), the permanent remains real-valued, decaying
symmetrically from $\perm=1$ at $t=0,1$ to the midpoint minimum,
with the sign pattern $(-1)^{(n-1)/2}$ visible as
the alternation of $+$ and $-$ regions across different~$n$.
For the DFT geodesic (top row), the trajectory sweeps out an
increasingly circular arc in the complex plane: starting at $\perm=1$,
the permanent acquires a large imaginary part before returning
to a tiny real value at $t=1$.  The near-circularity of this arc for
large~$n$ is striking and unexplained.

\begin{figure}[ht]
\centering
\includegraphics[width=\textwidth]{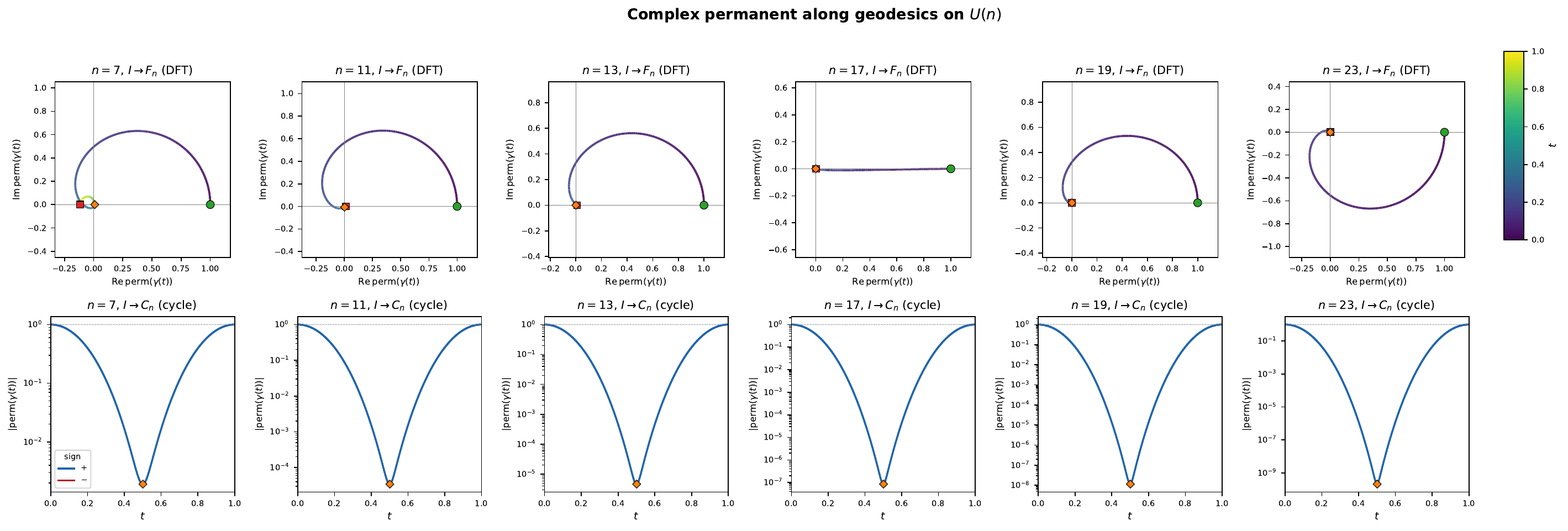}
\caption{Complex trajectory of $\perm(\gamma(t))$ along geodesics on
$U(n)$ for $n=7,11,13,17,19,23$.  Top: DFT geodesic $I\to F_n$, colored
by~$t$; green circle marks $t=0$, red square marks $t=1$, orange diamond
marks $t=1/2$.
Bottom: cycle geodesic $I\to C_n$ (log scale); the permanent is
real-valued, with blue/red indicating sign.}\label{fig:geodesic-cx}
\end{figure}

\section{Orthogonal matrices: a real counterpart}\label{sec:orthogonal}

We repeat the distributional analysis for the orthogonal group.
Since $O\in O(n)$ is real, $\perm(O)$ is a \emph{real} random
variable.  If the complex Gaussian for unitaries arose purely from a
CLT mechanism, we would expect $\perm(O)\sim N(0,\sigma^2)$ for
Haar-random $O$.

\subsection{Approximate normality with excess kurtosis}

For each $n\in\{7,11,13,17,19,23\}$, we generate $50{,}000$
Haar-random orthogonal matrices and compute $\perm(O)$ using the same
GPU pipeline.  The distribution is symmetric and bell-shaped
(Figure~\ref{fig:orthogonal}), but normality tests reveal systematic
\emph{excess kurtosis} that decreases with~$n$:

\begin{table}[ht]
\centering
\caption{Excess kurtosis of $\perm(O)$ for Haar-random
$O\in O(n)$.}\label{tab:orthogonal}
\medskip
\begin{tabular}{rccc}
\toprule
$n$ & $\operatorname{Var}[\perm(O)]$ & Excess kurtosis & SW $p$-value \\
\midrule
7  & $7.89\times 10^{-4}$ & $0.607$ & $<10^{-4}$ \\
11 & $3.88\times 10^{-6}$ & $0.418$ & $<10^{-4}$ \\
13 & $2.60\times 10^{-7}$ & $0.212$ & $0.001$ \\
17 & $1.14\times 10^{-9}$ & $0.145$ & $0.12$ \\
19 & $7.61\times 10^{-11}$ & $0.083$ & $0.19$ \\
\bottomrule
\end{tabular}
\end{table}

\begin{observation}\label{obs:orthogonal}
For Haar-distributed $O\in O(n)$, the permanent $\perm(O)$ is
approximately Gaussian with mean zero, but with positive excess
kurtosis that decreases as $O(1/n)$.  The convergence to Gaussian is
markedly slower than in the unitary case, where all normality tests
pass already at $n=7$.
\end{observation}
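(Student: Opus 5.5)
The natural way to turn this observation into a proof is a moment calculation: compute the second and fourth moments of $\perm(O)$ exactly via orthogonal Weingarten calculus, and read off the excess kurtosis as $\kappa(n)=\mathbb{E}[\perm(O)^4]/\mathbb{E}[\perm(O)^2]^2-3$. Mean zero is immediate, since multiplying the first row of $O$ by $-1$ preserves Haar measure and negates the permanent. I would then show $\kappa(n)=c/n+O(n^{-2})$ with $c>0$. This establishes the quantitative part of the observation (positive kurtosis decaying like $1/n$). It does not by itself establish approximate Gaussianity, which would additionally require control of all moments or a CLT-type argument.

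\textbf{Second moment.} Write
\[
\perm(O)^2=\sum_{\sigma,\tau}\prod_i O_{i\sigma(i)}O_{i\tau(i)}
\]
and apply the orthogonal Weingarten formula. For $2n$-fold products, this sums over pairs of pair-partitions $(\mathfrak m,\mathfrak n)$ of $[2n]$, weighted by $\mathrm{Wg}^{O}(\mathfrak m,\mathfrak n)$. Row indices are $1,1,2,2,\dots,n,n$, so the only compatible row pairing is the pairing of the two copies of each row, $\mathfrak m_0$. The column pairings $\mathfrak n$ compatible with $(\sigma,\tau)$ correspond bijectively to the ways of matching columns, which again reduces the sum to a function of $\sigma^{-1}\tau$. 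Summing over $\sigma$ gives a factor $n!$, leaving a sum over column pairings $\mathfrak n$ of $\mathrm{Wg}^O(\mathfrak m_0,\mathfrak n)$ times combinatorial multiplicities, which is a class function on the hyperoctahedral double cosets. I expect a closed form analogous to the unitary value $n!/n^n$. The leading-order heuristic is that each entry behaves like an independent $N(0,1/n)$ variable, giving $\mathbb{E}\perm(O)^2\approx n!/n^n$, but with $1/n$ corrections coming from Weingarten subleading terms. This must be checked against Table~\ref{tab:orthogonal}; for $n=7$, $7!/7^7\approx 6.1\times10^{-3}$, which is larger than the observed $7.9\times 10^{-4}$. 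That discrepancy means the orthogonal corrections are large, not perturbative, and I would compute the exact value by character expansion (zonal spherical functions of $(S_{2n},H_n)$) rather than by naive expansion.

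\textbf{Fourth moment.} Organize the computation as $\mathbb{E}\perm(O)^4=\sum_{\sigma_1,\dots,\sigma_4}\mathbb{E}\prod_{a,i}O_{i\sigma_a(i)}$. For each row, the four copies must pair into two pairs, giving $3^n$ row pairings; columns likewise. The Gaussian value $3\,\mathbb{E}[\perm^2]^2$ comes from the three global pairings of the four permanent copies, namely (12)(34), (13)(24) and (14)(23), each applied uniformly across all rows. The excess comes from mixed configurations in which different rows use different pairings of the four copies. Such configurations force $\sigma_1^{-1}\sigma_2$ and related permutations to respect the mixed pattern, which costs a combinatorial factor. I would show that configurations with exactly one or two ``defect'' rows (or cycles) dominate the excess, each contributing a relative factor $O(1/n)$. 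The sign is positive because the Weingarten leading terms are positive and the Gaussian-like entry moments $\mathbb{E}O_{ij}^4\approx 3/n^2$ exceed the squared second moment, feeding positive cumulants.

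\textbf{Main obstacle.} The hard step is controlling the full Weingarten sum uniformly. There are about $(n!)^4$ index tuples and exponentially many pairings, and subleading Weingarten terms are individually larger than the target $1/n$ effect. I would first try to express both moments as sums over cycle types of $\sigma^{-1}\tau$ and similar permutations, using generating functions in the spirit of Nezami's treatment of the unitary case \cite{Nezami21}. The plan is then to extract the coefficient $c$ by a saddle-point analysis, comparing with the unitary computation, where the corresponding fourth-cumulant term cancels by circularity. If an exact $c$ proves out of reach, a fallback is to fit $c$ from the data in Table~\ref{tab:orthogonal} (roughly $c\approx 2$–$4$) and prove only the upper bound $\kappa(n)=O(1/n)$.
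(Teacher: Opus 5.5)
The paper gives no proof of this observation. It is read off from Monte Carlo estimates (Table~\ref{tab:orthogonal}), so your moment method is a genuinely different, analytic route. Mean zero and the Weingarten reduction of the second moment are fine. However, the fourth-moment step, which carries the whole kurtosis claim, has a genuine gap, and the heuristic you use to fill it points the wrong way.

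Replace the entries of $O$ by i.i.d.\ $N(0,1/n)$ variables. The tuples $(\sigma_1,\dots,\sigma_4)$ with all edge multiplicities even are exactly those with $\sigma_1\triangle\sigma_2=\sigma_3\triangle\sigma_4$ as edge sets of $K_{n,n}$. Given $\sigma_1,\sigma_2$, whose symmetric difference has $c$ alternating cycles covering $r$ rows, there are $2^{c}(n-r)!$ choices of $(\sigma_3,\sigma_4)$. Hence
\[
\frac{\mathbb{E}\perm^4}{(\mathbb{E}\perm^2)^2}\;\ge\;\sum_{m=0}^{n}[x^m]\,\frac{e^{-2x}}{(1-x)^2}\;\sim\;\frac{n^2}{2e^2},
\]
which is already $\ge 7.3$ at $n=7$.

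So the mixed-pairing (``defect'') configurations are not a relative $O(1/n)$ correction whose sign is inherited from $\mathbb{E}O_{ij}^4\approx 3/n^2$. Under exactly that entrywise picture they dominate and grow quadratically. Whatever small, decreasing kurtosis Haar $O$ has must come from near-total cancellation among signed Weingarten terms. This happens at degree $4n$ in dimension $n$, where no large-$d$ expansion of $\mathrm{Wg}^O$ is available. The proposal contains nothing that controls this cancellation, so neither $\kappa(n)=O(1/n)$ nor $c>0$ follows.

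Two further problems:
\begin{itemize}
\item The sharp target $\kappa(n)=c/n+O(n^{-2})$ is not supported by the data: $n\kappa(n)\approx 4.2,\,4.6,\,2.8,\,2.5,\,1.6$. Fitting $c$ to these values is not a proof.
\item Circularity does not kill the unitary fourth cumulant $\mathbb{E}|Z|^4-2(\mathbb{E}|Z|^2)^2$. At $n=2$, $\perm U$ is circular with $|\perm U|^2=(2t-1)^2$, $t$ uniform, and the cumulant is $1/5-2/9\neq 0$. So the comparison with $U(n)$ in the statement is not addressed.
\end{itemize}

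Your second-moment anchor is also off, and correcting it points to the right tool. Since $m=x_1\cdots x_n$ is harmonic, $\perm(O)=\langle m,\,m\circ O^{T}\rangle$ (Fischer inner product, $\|m\|=1$) is a matrix coefficient of the irreducible $O(n)$-module $\mathcal{H}_n(\mathbb{R}^n)$. Schur orthogonality then gives, for $n\ge 2$,
\[
\mathbb{E}\perm(O)^2=\frac{1}{\dim\mathcal{H}_n(\mathbb{R}^n)}=\frac{4(2n-1)}{(3n-2)\binom{2n}{n}}.
\]
This matches every variance in Table~\ref{tab:orthogonal} to within about $1\%$ (e.g.\ $7.97\times10^{-4}$ at $n=7$), with no zonal-function computation.

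The same argument on $\mathrm{Sym}^n(\mathbb{C}^n)$ gives $\mathbb{E}|\perm U|^2=1/\binom{2n-1}{n}$ (e.g.\ $1/3$ at $n=2$), not $n!/n^n$. The two exact second moments differ by a factor tending to $4/3$. So the discrepancy you saw at $n=7$ reflects the failure of the i.i.d.\ picture for both groups, not large orthogonal-specific corrections.

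The analogous reduction for the fourth moment is $\mathbb{E}\perm(O)^4=\sum_\tau\|G_\tau\|_F^2/\dim\tau$. The sum runs over the isotypic components $\tau$ of $\mathcal{H}_n\otimes\mathcal{H}_n$, and $G_\tau$ is the Gram matrix of the components of $m\otimes m$ in the copies of $\tau$. What is missing is the computation of these projections and their asymptotics in $n$. Even then, as you note, approximate Gaussianity would need more than four moments.
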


The positive excess kurtosis (leptokurtosis) means heavier tails than
Gaussian --- the orthogonal permanent produces occasional large values
more often than a Gaussian would predict.  This is consistent with the
heuristic that the orthogonal group has $n(n{-}1)/2$ degrees of freedom
(versus~$n^2$ for the unitary group), so the CLT-like cancellation
among the $n!$ Ryser terms proceeds more slowly.

\begin{figure}[ht]
\centering
\includegraphics[width=\textwidth]{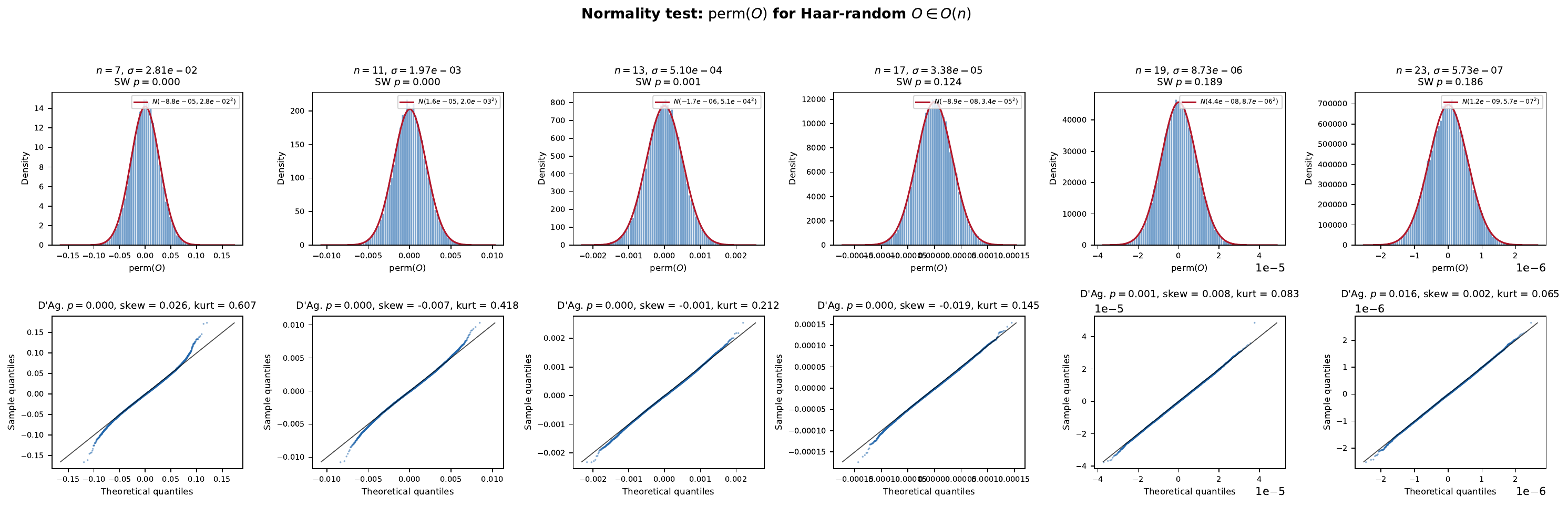}
\caption{Normality test for $\perm(O)$ with Haar-random $O\in O(n)$
($50{,}000$ samples each).  Top: histograms with fitted normal density.
Bottom: Q--Q plots.  Excess kurtosis is visible in the heavy tails for
small~$n$ but diminishes with increasing~$n$.}\label{fig:orthogonal}
\end{figure}

\subsection{Variance scaling}

The variance $\sigma^2(n)=\operatorname{Var}[\perm(O)]$ decays
exponentially, but significantly \emph{faster} than the unitary second
moment $n!/n^n\sim\sqrt{2\pi n}\,e^{-n}$ (Figure~\ref{fig:orth-var}).
The ratio
$\operatorname{Var}[\perm(O)]/(n!/n^n)$ decreases from $0.13$ at
$n=7$ to $0.001$ at $n=19$.  This discrepancy arises because the
orthogonal Weingarten function differs from its unitary counterpart:
the real entries of $O\in O(n)$ have stronger correlations (with
$n(n{-}1)/2$ rather than $n^2$ degrees of freedom), which amplifies
the cancellation in the Ryser sum and reduces the variance.

\begin{figure}[ht]
\centering
\includegraphics[width=0.5\textwidth]{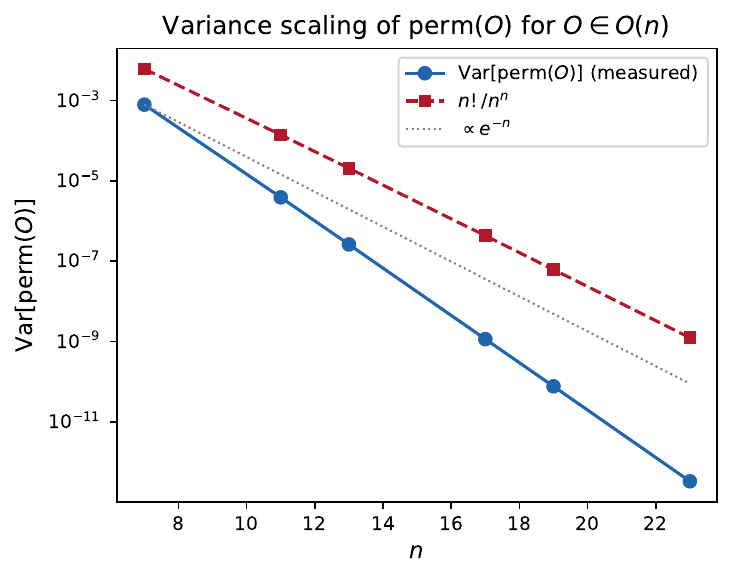}
\caption{Variance of $\perm(O)$ for Haar-random $O\in O(n)$,
compared with the unitary second moment $n!/n^n$.  The orthogonal
variance decays faster.}\label{fig:orth-var}
\end{figure}

\begin{remark}[Cycle geodesic on $O(n)$]
Since the $n$-cycle permutation matrix $C_n$ lies in $O(n)$ (and in
$SO(n)$ for odd~$n$), the geodesic $I\to C_n$ on $U(n)$ stays
entirely within~$O(n)$: the eigenvalues of $C_n$ come in conjugate
pairs, so the geodesic matrices are real.  All the cycle-geodesic
results of Section~\ref{sec:cycle} therefore apply equally to
$O(n)$.  The distinction between orthogonal and unitary permanents
manifests only in the \emph{distributional} behavior, not the
geodesic geometry.
\end{remark}

\section{Gaussian matrix ensembles: heavy-tailed permanents}
\label{sec:gaussian}

The complex Gaussian distribution of $\perm(U)$ for Haar unitaries
relies on the CLT heuristic: the permanent is a sum of $n!$ terms,
each of magnitude $\sim n^{-n/2}$, and if these are sufficiently
decorrelated, a Gaussian limit follows.  A natural question is whether
this mechanism extends to other natural matrix ensembles.  We
investigate four Gaussian ensembles:
\begin{itemize}[leftmargin=2em]
\item \textbf{GUE}: $H=(A+A^*)/2$, where $A$ has i.i.d.\
  $\mathcal{CN}(0,1)$ entries;
\item \textbf{GOE}: $H=(A+A^T)/2$, where $A$ has i.i.d.\
  $N(0,1)$ entries;
\item \textbf{Complex Ginibre}: $G$ with i.i.d.\
  $\mathcal{CN}(0,1)$ entries;
\item \textbf{Real Ginibre}: $G$ with i.i.d.\ $N(0,1)$ entries.
\end{itemize}
For each ensemble, we compute $20{,}000$ permanents at each
$n\in\{7,11,13,17\}$ using the GPU pipeline.

\subsection{Permanents of Hermitian matrices are real}

\begin{proposition}\label{prop:herm-real}
If $H$ is an $n\times n$ Hermitian matrix, then $\perm(H)\in\mathbb{R}$.
\end{proposition}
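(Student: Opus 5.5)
The plan is to show $\overline{\perm(H)}=\perm(H)$ using only two facts: the permanent is invariant under transposition, and conjugation commutes with polynomial expressions in the entries.

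\textbf{Step 1 (transpose invariance).} Reindexing the defining sum over $S_n$ by $\sigma\mapsto\sigma^{-1}$ gives
\[
  \perm(A)=\sum_{\sigma\in S_n}\prod_{i=1}^n a_{i,\sigma(i)}
  =\sum_{\sigma\in S_n}\prod_{j=1}^n a_{\sigma^{-1}(j),j}
  =\perm(A^T).
\]
For each $\sigma$, the substitution $j=\sigma(i)$ merely permutes the factors within a single product. Since $\sigma\mapsto\sigma^{-1}$ is a bijection of $S_n$, the sum over $\sigma$ is unchanged.

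\textbf{Step 2 (conjugation).} The permanent is a polynomial with integer coefficients in the entries, so $\overline{\perm(A)}=\perm(\overline{A})$. Hence
\[
  \overline{\perm(A)}=\perm(\overline{A})=\perm(\overline{A}^{\,T})=\perm(A^*),
\]
where the middle equality is Step 1 applied to $\overline{A}$.

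\textbf{Step 3.} If $H=H^*$, Step 2 gives $\overline{\perm(H)}=\perm(H^*)=\perm(H)$, so $\perm(H)\in\mathbb{R}$.

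None of these steps is a genuine obstacle. The only point needing care is the reindexing in Step 1, where one must check that $\prod_i a_{i,\sigma(i)}=\prod_j a_{\sigma^{-1}(j),j}$ holds term by term. As a consequence, the GUE and GOE permanents studied below are real-valued random variables.
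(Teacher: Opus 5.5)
Your proof is correct and is essentially the paper's argument: the paper also conjugates the defining sum, uses $\overline{h_{ij}}=h_{ji}$, and reindexes by $\sigma\mapsto\sigma^{-1}$. You have only split this into the two general identities $\perm(A^T)=\perm(A)$ and $\overline{\perm(A)}=\perm(\overline{A})$, which yields the slightly more general fact $\overline{\perm(A)}=\perm(A^*)$ for every square matrix $A$.
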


\begin{proof}
We have $\overline{h_{ij}}=h_{ji}$, so
\[
  \overline{\perm(H)}
  = \sum_{\sigma\in S_n}\prod_{i=1}^n \overline{h_{i,\sigma(i)}}
  = \sum_{\sigma\in S_n}\prod_{i=1}^n h_{\sigma(i),i}
  = \sum_{\sigma^{-1}\in S_n}\prod_{j=1}^n h_{j,\sigma^{-1}(j)}
  = \perm(H),
\]
where we used the bijection $\sigma\mapsto\sigma^{-1}$ on $S_n$.
\end{proof}

In particular, $\perm(H)$ is real for every GUE matrix, despite $H$
having complex entries.  This is confirmed numerically:
$|\operatorname{Im}\perm(H)|/|\operatorname{Re}\perm(H)|<10^{-14}$
across all $80{,}000$ samples.

\subsection{Failure of the CLT: $\alpha$-stable distributions}

The permanent distributions for all four ensembles are spectacularly
non-Gaussian: excess kurtosis ranges from $34$ (GUE, $n=7$) to $689$
(GOE, $n=17$) and \emph{increases} with~$n$, opposite to the
convergence seen for Haar unitaries.  Neither Gaussian nor Student-$t$
distributions provide adequate fits (KS $p$-values $<10^{-4}$ for
Gaussian; Student-$t$ with fitted degrees of freedom
$\nu\approx 1$--$2.5$ is better but still rejected).

However, the permanents are well described by \emph{$\alpha$-stable
distributions}~\cite{SamorodNolan}.  A symmetric $\alpha$-stable
random variable $X\sim S(\alpha,0,\gamma,\delta)$ is characterized by
its stability index $\alpha\in(0,2]$: the case $\alpha=2$ is
Gaussian, $\alpha=1$ is Cauchy, and for $\alpha<2$ the tails decay as
$P(|X|>x)\sim x^{-\alpha}$ (so $\operatorname{Var}[X]=\infty$ for
$\alpha<2$ and $\mathbb{E}[|X|]=\infty$ for $\alpha\le 1$).  We
estimate~$\alpha$ using both McCulloch's quantile
method~\cite{McCulloch86} and regression on the empirical
characteristic function; the two methods agree to within $\pm 0.03$.

\begin{observation}\label{obs:stable}
The permanent of an $n\times n$ matrix drawn from each Gaussian
ensemble follows a symmetric $\alpha$-stable distribution with
stability index~$\alpha$ approximately constant in~$n$:
\begin{center}
\begin{tabular}{lcccc}
\toprule
 & $n=7$ & $n=11$ & $n=13$ & $n=17$ \\
\midrule
GOE             & $1.02$ & $0.99$ & $0.98$ & $0.96$ \\
GUE             & $1.21$ & $1.16$ & $1.18$ & $1.15$ \\
Ginibre (real)  & $1.19$ & $1.17$ & $1.18$ & $1.19$ \\
Ginibre (complex) & $1.42$ & $1.39$ & $1.39$ & $1.36$ \\
\bottomrule
\end{tabular}
\end{center}
In all cases $|\beta|<0.06$ (symmetric), and KS $p$-values for the
stable fit range from $0.08$ to $0.98$.
\end{observation}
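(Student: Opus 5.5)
Since this is an empirical observation rather than a theorem, the ``proof'' I would give is a verification protocol together with a theoretical sanity check that fixes what the statement can and cannot mean. The first thing to record is a caveat. For each fixed $n$, $\perm(X)$ is a polynomial of degree $n$ in Gaussian variables, so all of its moments are finite. Hypercontractivity (or simply Wick's formula applied to $\mathbb{E}[\perm(X)^{2k}]$) then gives tails of stretched-exponential type, roughly $\Pr[|\perm(X)|>x]\approx\exp(-c\,x^{2/n})$ up to scaling. Hence $\perm(X)$ cannot literally be $\alpha$-stable with $\alpha<2$. I would therefore prove the observation in the following precise sense: over the sampled range (bulk to roughly the $1-1/N$ quantile, with $N=20{,}000$), the empirical law is statistically indistinguishable from a symmetric stable law, with a fitted index that is stable in $n$. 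I would also give the heuristic explanation: a stretched exponential with exponent $2/n$ is locally indistinguishable from a power law over many decades once $n\ge 7$.

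The verification itself proceeds in the following order.
\begin{enumerate}[label=(\arabic*)]
\item Generate $20{,}000$ samples per ensemble and per $n\in\{7,11,13,17\}$ with the GPU Ryser kernel. For GUE and GOE, use Proposition~\ref{prop:herm-real} to reduce to the real part and check that the imaginary part is at roundoff level. For complex Ginibre, fit the real part or the magnitude, stating which.
\item Estimate $(\alpha,\beta,\gamma,\delta)$ by McCulloch's quantile method. Cross-check by least-squares regression of $\log(-\log|\hat\phi(u)|^2)$ on $\log|u|$ over the empirical characteristic function, and require agreement to $\pm0.03$.
\item Attach bootstrap confidence intervals to $\hat\alpha$ and $\hat\beta$. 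Confirm $|\hat\beta|<0.06$; symmetry is in any case forced by $X\mapsto -X$ invariance of each ensemble, which flips the sign of $\perm$ for odd $n$ (negate a single row).
\item Run KS tests against the fitted stable CDF. Because the parameters are estimated from the same data, calibrate the $p$-values by parametric bootstrap (Lilliefors-style).
\item Check constancy in $n$ by testing whether $\hat\alpha(n)$ has a significant slope.
\end{enumerate}

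The main obstacle is steps (4) and (5) interacting with the caveat above. The fitted $\alpha$ may drift slowly with $n$ (the GOE values already decrease from $1.02$ to $0.96$), and the true extreme tail is thinner than any stable law. I would address this first with a Hill-estimator plot of the upper order statistics. A plateau over a range of $k$ would support an effective power-law regime with exponent near $\hat\alpha$; curvature at the very largest $k$ would locate where the stretched-exponential cutoff sets in. I would then report the observation as holding on the plateau range, with the drift in $\hat\alpha$ quoted as a confidence band rather than as exact constancy.
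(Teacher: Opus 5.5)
Steps (1), (2) and (4) of your protocol are exactly what the paper does. It draws $20{,}000$ samples per ensemble at $n\in\{7,11,13,17\}$, estimates $\alpha$ by McCulloch's quantile method cross-checked against empirical-characteristic-function regression to within $\pm0.03$, runs KS tests against the fitted stable law, and adds CCDF and Q--Q plots. It offers nothing beyond the fit and a heuristic about unbounded entries.

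Where you genuinely depart is in interpretation, and there you are right and the paper is not. A degree-$n$ polynomial in Gaussians has moments of all orders, so for fixed $n$ the permanent cannot be $\alpha$-stable with $\alpha<2$. The paper's claims of infinite variance (and, for GOE, infinite mean) are therefore false. The infinite-variance claim even contradicts the paper's own later identity $\mathbb{E}[|\perm(X)|^2]=n!$ for Ginibre matrices.

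Your version buys three things:
\begin{itemize}
\item an honest statement of what the data can support;
\item an exact proof of $\beta=0$ (via $X\mapsto -X$ and $n$ odd), where the paper only measures $|\beta|<0.06$;
\item diagnostics the paper lacks.
\end{itemize}
The last point matters because none of the paper's quantitative tools probe tails. McCulloch's estimator sees only the $5$--$95\%$ quantiles. The ECF regression slope tends to $2$ as $u\to0$ for any finite-variance law, so $\hat\alpha$ depends on the frequency window. Plain KS $p$-values are inflated when parameters are fitted from the same sample. So the agreement of the two estimators is not independent evidence of stability, and your Hill plot and parametric-bootstrap calibration address exactly this gap.

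One correction to your heuristic: the local tail index of $\exp(-cx^{2/n})$ scales like $x^{2/n}$. It therefore grows by a factor $10^{2/n}\approx 1.9$ per decade at $n=7$, so it does not mimic a power law over many decades. The apparent stable shape is more plausibly a bulk effect. If $X$ has symmetric sign and $|X|$ is lognormal with log-standard-deviation $s$, McCulloch's ratio $(x_{.95}-x_{.05})/(x_{.75}-x_{.25})$ equals $e^{1.28s}$. For $0.7<s<1.44$ this lies strictly between the Gaussian and Cauchy values, giving $\hat\alpha\in(1,2)$ with all moments finite. The paper's own lognormal section shows that $\log|\perm|$ is roughly normal, which fits this picture.

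Expect your refined statement to have to concede a slow drift in $\hat\alpha$, already visible in the GOE and complex Ginibre columns, rather than the exact constancy the paper reports.
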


The hierarchy of stability indices is:
\[
  \alpha_{\mathrm{GOE}}\approx 1.0
  \;<\;
  \alpha_{\mathrm{GUE}}\approx\alpha_{\text{Gin-R}}\approx 1.2
  \;<\;
  \alpha_{\text{Gin-C}}\approx 1.4
  \;<\;
  \alpha_{\mathrm{Haar}}\,{=}\,2.
\]
The GOE permanent is essentially Cauchy-distributed ($\alpha\approx 1$,
infinite mean), while the complex Ginibre permanent has the lightest
tails among the Gaussian ensembles, yet is still far from Gaussian.

Figure~\ref{fig:stable} shows the complementary CDF on log-log axes
for $n=13$ (left panel), where the power-law tails are clearly visible
and the Gaussian reference drops off too steeply, together with the
fitted $\alpha$ as a function of~$n$ (right panel), confirming the
stability of $\alpha$ across matrix sizes.
Figure~\ref{fig:stable-qq} presents Q--Q plots of the data against
fitted stable quantiles for all four ensembles and all four values
of~$n$.  The agreement along the diagonal confirms the quality of the
stable fit across the entire range, including the extreme tails.

\begin{figure}[ht]
\centering
\includegraphics[width=\textwidth]{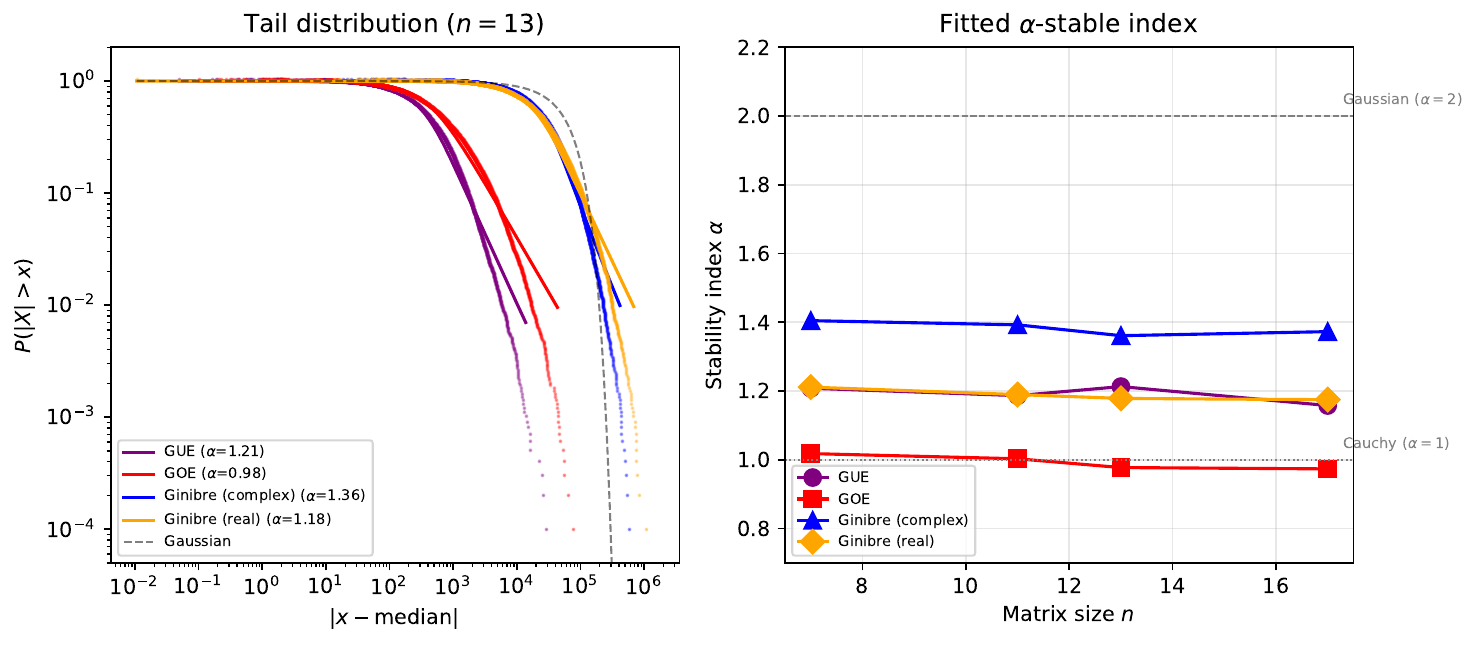}
\caption{Left: complementary CDF (tail distribution) of $|\perm - \mathrm{median}|$
for $n=13$ across four Gaussian ensembles, with fitted stable
distributions (solid curves) and Gaussian reference (dashed).
Right: fitted stability index $\alpha$ vs.\ matrix size $n$;
$\alpha=2$ (Gaussian) and $\alpha=1$ (Cauchy) reference lines are
shown.}\label{fig:stable}
\end{figure}

\begin{figure}[ht]
\centering
\includegraphics[width=\textwidth]{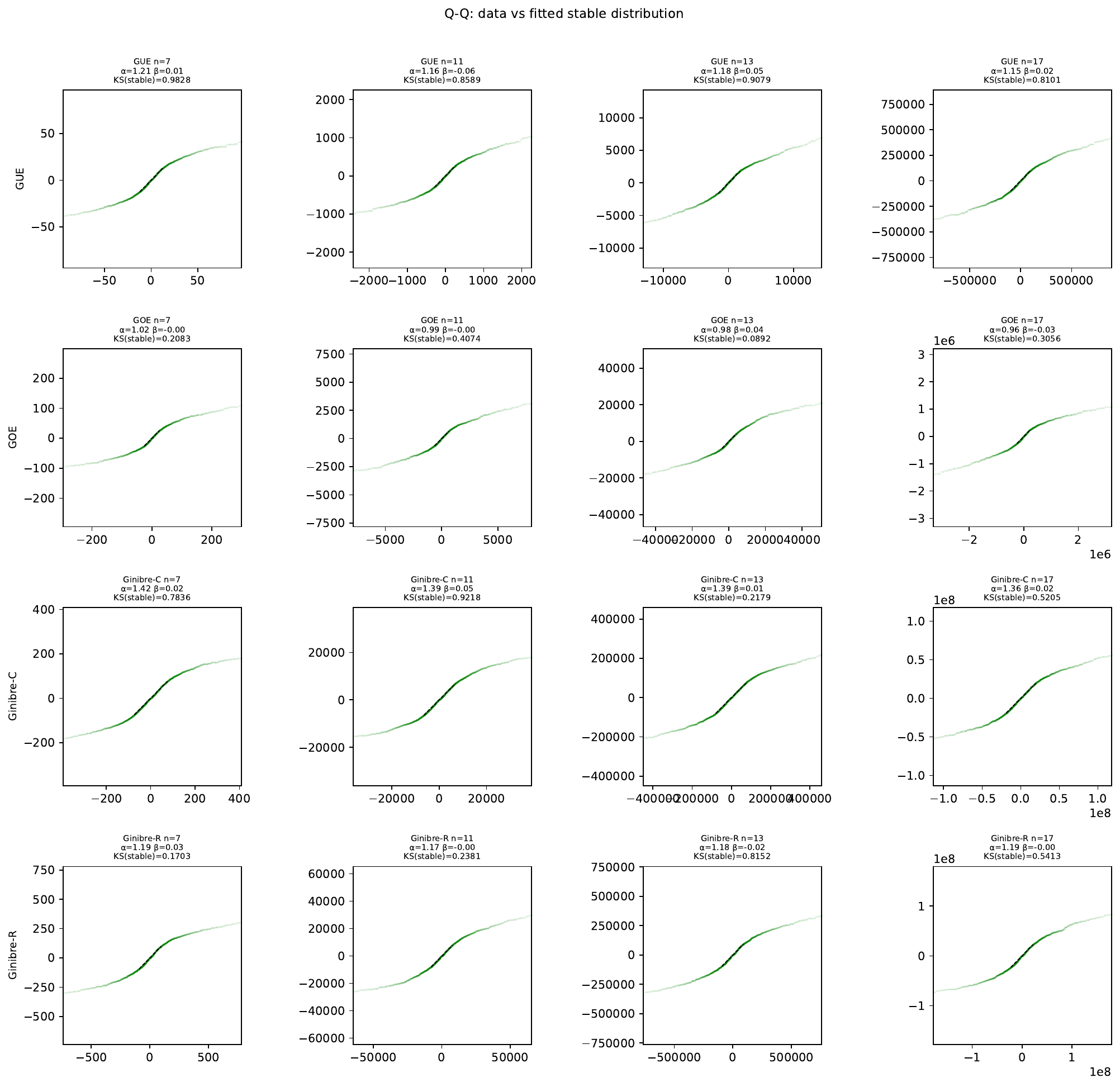}
\caption{Q--Q plots of permanent data (vertical axis) against fitted
$\alpha$-stable quantiles (horizontal axis) for four Gaussian matrix
ensembles and $n=7,11,13,17$ ($20{,}000$ samples each).  Each panel
shows the fitted $\alpha$, $\beta$, and KS $p$-value.  Points falling
on the diagonal indicate a good fit.}\label{fig:stable-qq}
\end{figure}

\begin{remark}[Unitarity is essential for Gaussianity]
The key difference between Haar unitaries (where $\perm(U)$ is
Gaussian) and the Gaussian ensembles (where it is $\alpha$-stable
with $\alpha<2$) is that unitary matrices have \emph{bounded} entries
($|U_{ij}|\le 1$), while Gaussian matrices have unbounded entries.
When all entries are $O(n^{-1/2})$, the $n!$ terms in the Ryser sum
are comparable in magnitude, producing good CLT-like cancellation.
For Gaussian matrices, occasional large entries create heavy-tailed
products that dominate the sum, destroying the CLT mechanism and
producing stable (rather than Gaussian) limits.
\end{remark}

\section{Aaronson's lognormal conjecture}\label{sec:lognormal}

Aaronson and Arkhipov~\cite{AA14} proved that for the complex Ginibre
ensemble ($X$ with i.i.d.\ $\mathcal{CN}(0,1)$ entries),
$|\det(X)|^2$ converges to a lognormal random variable, and
conjectured the same for $|\perm(X)|^2$ based on the visual similarity
of their densities.  We test this conjecture numerically for all four
Gaussian ensembles.

\subsection{Setup}

For each ensemble (complex Ginibre, real Ginibre, GUE, GOE), we
generate $50{,}000$ matrices at each~$n$ and compute $\log|\cdot|$.
Determinants are computed via LU decomposition for
$n\in\{7,9,\ldots,101\}$; permanents via GPU-accelerated Ryser for
$n\le 17$.  We test normality of $\log|\cdot|$ via
Kolmogorov--Smirnov, Shapiro--Wilk, Anderson--Darling, and
D'Agostino--Pearson tests, plus skewness and excess kurtosis.

\subsection{Results}

Table~\ref{tab:lognormal} compares $\log|\perm|$ and $\log|\det|$.

\begin{table}[ht]
\centering
\caption{Normality of $\log|\perm|$ vs.\ $\log|\det|$ ($50{,}000$
samples).}\label{tab:lognormal}
\medskip
\begin{tabular}{llrrrrrr}
\toprule
& & \multicolumn{3}{c}{$\log|\det|$} & \multicolumn{3}{c}{$\log|\perm|$} \\
\cmidrule(lr){3-5}\cmidrule(lr){6-8}
Ensemble & $n$ & Skew & Kurt & AD & Skew & Kurt & AD \\
\midrule
Ginibre-C & 7  & $-0.48$ & $0.61$ & $111$ & $-0.48$ & $0.68$ & $109$ \\
          & 13 & $-0.38$ & $0.39$ & $73$  & $-0.39$ & $0.65$ & $68$ \\
          & 17 & $-0.34$ & $0.34$ & $55$  & $-0.39$ & $0.63$ & $67$ \\
\midrule
GOE       & 7  & $-0.63$ & $0.79$ & $195$ & $-0.58$ & $0.87$ & $154$ \\
          & 13 & $-0.50$ & $0.60$ & $126$ & $-0.47$ & $0.84$ & $100$ \\
          & 17 & $-0.44$ & $0.46$ & $97$  & $-0.46$ & $0.93$ & $87$ \\
\midrule
Ginibre-R & 7  & $-0.74$ & $1.34$ & $249$ & $-0.77$ & $1.37$ & $271$ \\
          & 13 & $-0.66$ & $1.10$ & $189$ & $-0.71$ & $1.37$ & $217$ \\
          & 17 & $-0.58$ & $0.89$ & $147$ & $-0.77$ & $1.86$ & $227$ \\
\midrule
GUE       & 7  & $-0.76$ & $1.21$ & $280$ & $-0.80$ & $1.56$ & $285$ \\
          & 13 & $-0.63$ & $1.05$ & $171$ & $-0.69$ & $1.42$ & $210$ \\
          & 17 & $-0.57$ & $0.87$ & $148$ & $-0.73$ & $1.86$ & $211$ \\
\bottomrule
\end{tabular}
\end{table}

Figure~\ref{fig:lognormal-qq} presents Q--Q plots comparing
$\log|\det|$ and $\log|\perm|$ against normal quantiles for the
complex Ginibre ensemble, and Figure~\ref{fig:lognormal-stats}
summarizes the skewness and kurtosis trajectories across all ensembles.

\begin{figure}[ht]
\centering
\includegraphics[width=\textwidth]{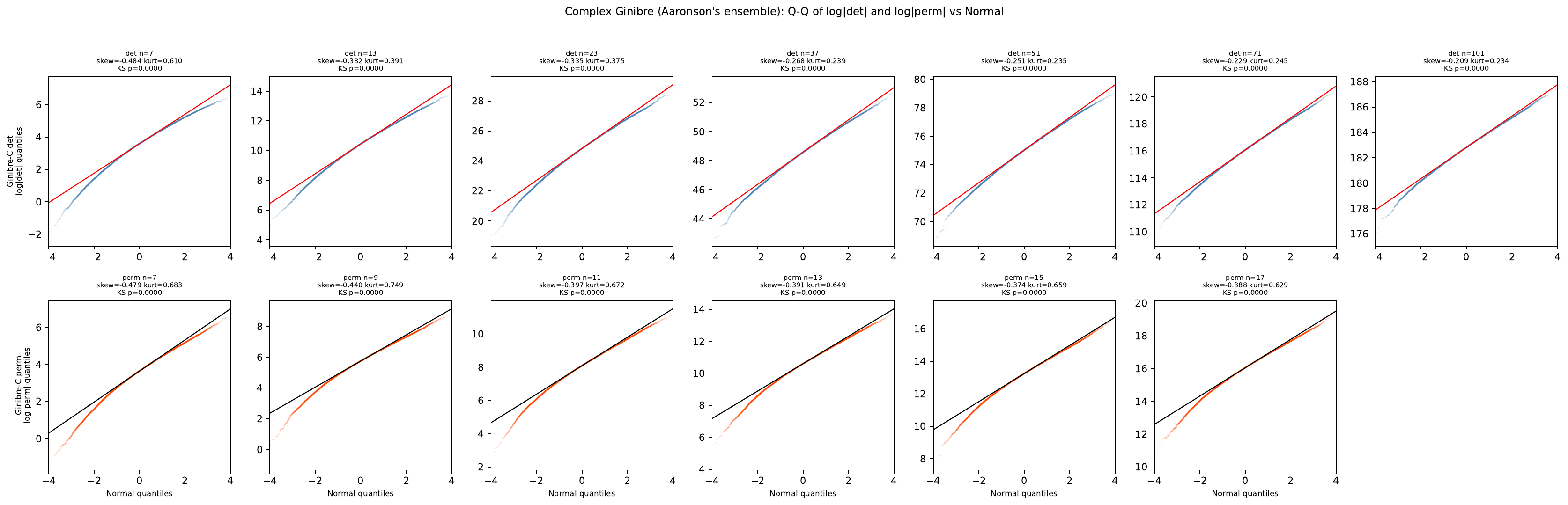}
\caption{Q--Q plots of $\log|\det|$ (top, blue) and $\log|\perm|$
(bottom, red) against normal quantiles for the complex Ginibre ensemble.
The determinant straightens visibly as $n$ grows from $7$ to $101$;
the permanent improves more slowly over its accessible range $n\le 17$.
}\label{fig:lognormal-qq}
\end{figure}

\begin{figure}[ht]
\centering
\includegraphics[width=\textwidth]{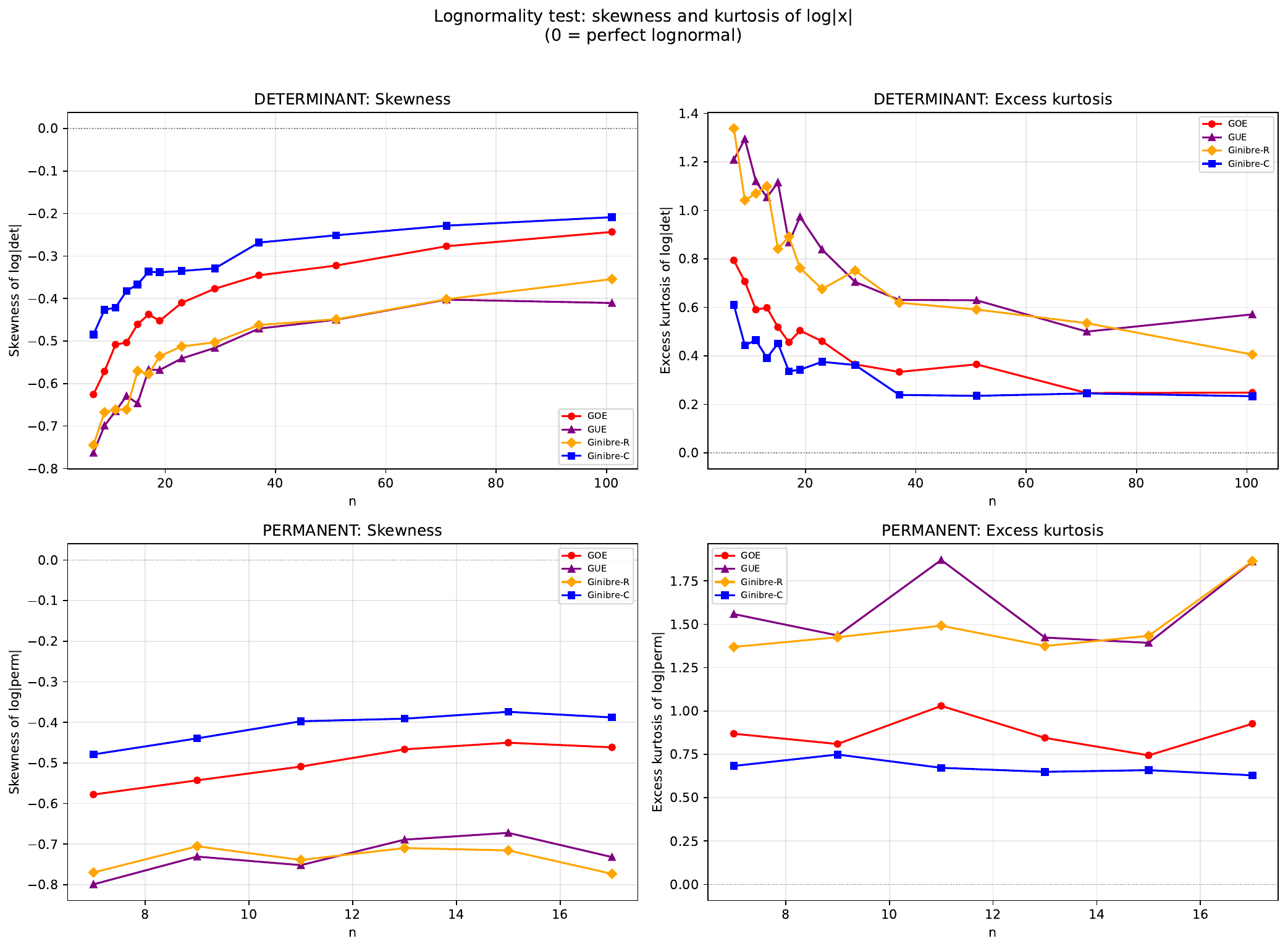}
\caption{Skewness and excess kurtosis of $\log|x|$ for determinants
(top) and permanents (bottom) across all four Gaussian ensembles.
Normal corresponds to $(0,0)$.  The determinant converges toward
$(0,0)$ for all ensembles; the permanent converges for GOE and
Ginibre-C but not for GUE and Ginibre-R.}\label{fig:lognormal-stats}
\end{figure}

A striking dichotomy emerges:
\begin{itemize}
\item \textbf{Ginibre-C and GOE}: The permanent tracks the
  determinant closely; both improve with~$n$.  Aaronson's lognormal
  conjecture is plausible for these ensembles.
\item \textbf{GUE and Ginibre-R}: The permanent \emph{diverges} from
  the determinant.  While $\log|\det|$ steadily improves,
  $\log|\perm|$ shows flat or worsening skewness ($\approx{-}0.7$ to
  ${-}0.8$) and kurtosis ($\approx 1.4$ to $1.9$).  A lognormal
  conjecture for these ensembles appears to be false.
\end{itemize}

\subsection{Connection to $\alpha$-stable distributions}

The divergent behavior of GUE/Ginibre-R permanents has a natural
explanation via the $\alpha$-stable distributions of
Section~\ref{sec:gaussian}.  For $\alpha<2$, the power-law tails
$P(|X|>x)\sim x^{-\alpha}$ produce a heavy \emph{left} tail in
$\log|X|$ (many near-zero values), manifesting as persistent negative
skewness that does not diminish with~$n$.  The hierarchy is:
$\alpha\approx 1.4$ (Ginibre-C, lightest tails, lognormal most
plausible) $>$ $\alpha\approx 1.2$ (GUE, Ginibre-R, lognormal
doubtful) $>$ $\alpha\approx 1.0$ (GOE, Cauchy-like, yet $\log|\perm|$
still improves slowly).  The determinant converges to lognormal for
\emph{all} ensembles because $\log|\det(X)|^2=\sum_k\log\chi^2_{2k}$
is a sum of independent random variables to which Berry--Esseen applies.

\subsection{Anti-concentration for Gaussian ensembles}\label{sec:ac}

The PACC (see Section~\ref{sec:boson}) asks whether
$\Pr[|\perm|^2\ge\mathbb{E}[|\perm|^2]/\mathrm{poly}(n)]\ge
1/\mathrm{poly}(n)$.  For Gaussian matrices,
$\mathbb{E}[|\perm(X)|^2]=n!$ by independence of entries.

\begin{observation}\label{obs:ac}
Anti-concentration holds for all four Gaussian ensembles, and is in
fact \textbf{more robust} than for Haar unitaries: at the threshold
$\Pr[|\perm|^2\ge\mathbb{E}/n^2]$, the Gaussian ensembles
show \emph{increasing} probability with~$n$ ($67\%$--$98\%$ at
$n=17$), while Haar unitaries \emph{decrease} ($18\%$ at $n=17$).
\end{observation}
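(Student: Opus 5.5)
This observation is partly a numerical claim, so the plan separates what can be proved from what must be measured. The provable part is that the Gaussian ensembles anti-concentrate at threshold $\mathbb{E}/n^2$ with probability bounded below. The increasing-versus-decreasing comparison with Haar unitaries is an empirical finding.

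\textbf{Normalization and second moment.} For the ensemble in question, let $P=\perm(X)$. First compute $\mathbb{E}|P|^2$ exactly:
\begin{itemize}
\item For complex Ginibre, independence and $\mathbb{E}[x\bar y]=\delta$ give $\mathbb{E}|P|^2=n!$.
\item For real Ginibre the same computation gives $n!$.
\item For GUE and GOE the entries are only independent up to the symmetry $h_{ij}=\overline{h_{ji}}$. Here $\mathbb{E}|P|^2=\sum_{\sigma,\tau}\mathbb{E}\prod h_{i\sigma(i)}\overline{h_{i\tau(i)}}$, which I will evaluate by Wick's theorem. Pairings now also match $h_{ij}$ with $h_{ji}$, so the second moment is $n!$ times a combinatorial factor that I will bound above and below by $C^n n!$-type quantities.
\end{itemize}
Using the true second moment (rather than $n!$) as $\mathbb{E}$ is the convention I will adopt for the Hermitian ensembles.

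\textbf{Lower bound via Paley--Zygmund.} Apply Paley--Zygmund to $Y=|P|^2$:
\[
\Pr[Y\ge \theta\,\mathbb{E}Y]\ge (1-\theta)^2\,\frac{(\mathbb{E}Y)^2}{\mathbb{E}Y^2}, \qquad \theta=n^{-2}.
\]
So it suffices to bound the fourth moment, $\mathbb{E}|P|^4\le \mathrm{poly}(n)\,(\mathbb{E}|P|^2)^2$. For complex Ginibre this is the known computation $\mathbb{E}|\perm X|^4=(n!)^2(n+1)$ (Aaronson--Arkhipov), so Paley--Zygmund gives probability $\gtrsim 1/(n+1)$. For real Ginibre and GOE/GUE I will repeat the Wick expansion for four copies, $\sum_{\sigma_1,\dots,\sigma_4}$, and classify the contributing pairings by the cycle structure of $\sigma_1\sigma_3^{-1}$, $\sigma_2\sigma_4^{-1}$ and so on.

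\textbf{Main obstacle: the fourth-moment bound for the non-complex ensembles.} I expect this step to be hardest, and for the following reason. The $\alpha$-stable fits with $\alpha<2$ suggest that the ratio $\mathbb{E}|P|^4/(\mathbb{E}|P|^2)^2$ may grow exponentially for GOE, which would make Paley--Zygmund useless. If that happens, my fallback is a small-ball argument instead. Condition on all rows but one; $P$ is then linear in the remaining row $x$, $P=\langle x,v\rangle$ with $v$ the vector of cofactor permanents, so $\Pr[|P|^2\le \varepsilon\|v\|^2]\lesssim\sqrt\varepsilon$ by Gaussian anti-concentration of a linear form. Iterating over rows (a Carbery--Wright-type bound for the degree-$n$ polynomial $P$) then gives $\Pr[|P|^2\le n^{-2}\mathbb{E}]\le 1-c$, up to polynomial losses.

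\textbf{The comparative statement.} The claim that probabilities increase with $n$ for Gaussian ensembles while Haar unitaries fall to $18\%$ at $n=17$ is not something I would try to prove. I would establish it by the GPU sampling of Section~\ref{sec:computation}, with binomial confidence intervals reported for each ensemble and each $n$.
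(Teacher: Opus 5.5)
The paper gives no proof of this observation. Its support is the Monte Carlo frequencies of Figure~\ref{fig:ac}, together with the heuristic that a symmetric $\alpha$-stable law with $\alpha>1$ has bounded density at the origin. Your Paley--Zygmund route is therefore genuinely different, and for the Ginibre ensembles it gives rigorous content the paper lacks.

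\begin{itemize}
\item For complex Ginibre it is exactly Aaronson--Arkhipov's weak anti-concentration, $\Pr[|P|^2\ge n!/n^2]\ge(1-n^{-2})^2/(n+1)$.
\item For real Ginibre your ``main obstacle'' is not one. Rows are independent, and Wick assigns to each row one of the three pairings of the four copies. A partition of the rows into pairing classes of sizes $p,a,b$ admits exactly $n!\,p!\,a!\,b!$ consistent quadruples $(\sigma_1,\dots,\sigma_4)$. Summing gives $\mathbb{E}\perm(X)^4=(n!)^2\binom{n+2}{2}$, again a polynomial ratio.
\item Your fear that the $\alpha$-stable fits force exponential growth of $\mathbb{E}|P|^4/(\mathbb{E}|P|^2)^2$ is misplaced. $\perm(X)$ is a polynomial in Gaussians, so it has moments of all orders and cannot literally be $\alpha$-stable with $\alpha<2$. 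A fit to the bulk says nothing about moment ratios.
\end{itemize}

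What this route buys is only a $1/\mathrm{poly}(n)$ bound that \emph{decreases} in $n$. The $67$--$98\%$ values and the ``increasing, more robust than Haar'' comparison remain purely empirical in both treatments, as you say. If you run that comparison, note a problem with the Haar figure. With $\mathbb{E}=n!/n^n$, the paper's own law \eqref{eq:exp} gives $e^{-1/n^2}\approx 0.997$ at this threshold, not $18\%$. You will have to pin down the normalization behind that number first.

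For GUE and GOE there is a genuine gap.

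\begin{itemize}
\item \textbf{Precision of the moments.} Paley--Zygmund needs $\mathbb{E}|P|^4\le\mathrm{poly}(n)\,(\mathbb{E}|P|^2)^2$. So both moments must be controlled up to polynomial factors; $C^n n!$-type bounds on the second moment are useless.
\item \textbf{Normalization.} You are right that ``$\mathbb{E}=n!$ by independence'' fails here, and the change is not cosmetic. With $H=(A+A^*)/2$ the entries have variance $\tfrac12$. For GUE, the Wick count shows that the contributing pairs $(\sigma,\tau)$ are exactly those with $\tau=\sigma^{-1}$ off the union $D$ of the $2$-cycles of $\sigma$, and $\tau|_D$ an arbitrary fixed-point-free involution of $D$. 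Hence $\mathbb{E}\perm(H)^2=2^{-n}n!\cdot O(\sqrt n)$, and at threshold $n!/n^2$ Markov caps the probability at $O(n^{5/2}2^{-n})$.
\item \textbf{Fourth moment.} This computation is the whole argument for these ensembles, and it is only announced.
\item \textbf{Fallback.} It fails as written, because the rows of a Hermitian matrix are not independent. Conditioning on rows $1,\dots,n-1$ fixes row $n$ except for $h_{nn}$. So $P$ is affine in a single scalar, not a linear form in a Gaussian vector. You must instead expose principal minors. There $\perm(H_{k+1})$ is a quadratic form in the new column plus $h_{k+1,k+1}\perm(H_k)$. This requires quadratic small-ball estimates, and control at every step of the coefficient matrix formed by the permanents of the $(k-1)\times(k-1)$ minors of $H_k$.
\item \textbf{Iteration.} ``Iterate, then Carbery--Wright for the degree-$n$ polynomial'' does not close the argument either. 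Carbery--Wright gives $\Pr[|P|\le\varepsilon\|P\|_2]\le Cn\,\varepsilon^{1/n}$, which is vacuous at $\varepsilon=1/n$. Turning row exposure into an $\mathbb{E}/\mathrm{poly}(n)$ bound needs a Tao--Vu-style martingale analysis of the cofactor norms.
\end{itemize}
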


The heavy $\alpha$-stable tails that destroy lognormality
simultaneously \emph{strengthen} anti-concentration: a symmetric
$\alpha$-stable distribution with $\alpha>1$ has bounded density at the
origin, preventing mass from piling up near zero.  The failure of
the CLT and the success of anti-concentration are two sides of the
same coin (Figure~\ref{fig:ac}).

\begin{figure}[ht]
\centering
\includegraphics[width=\textwidth]{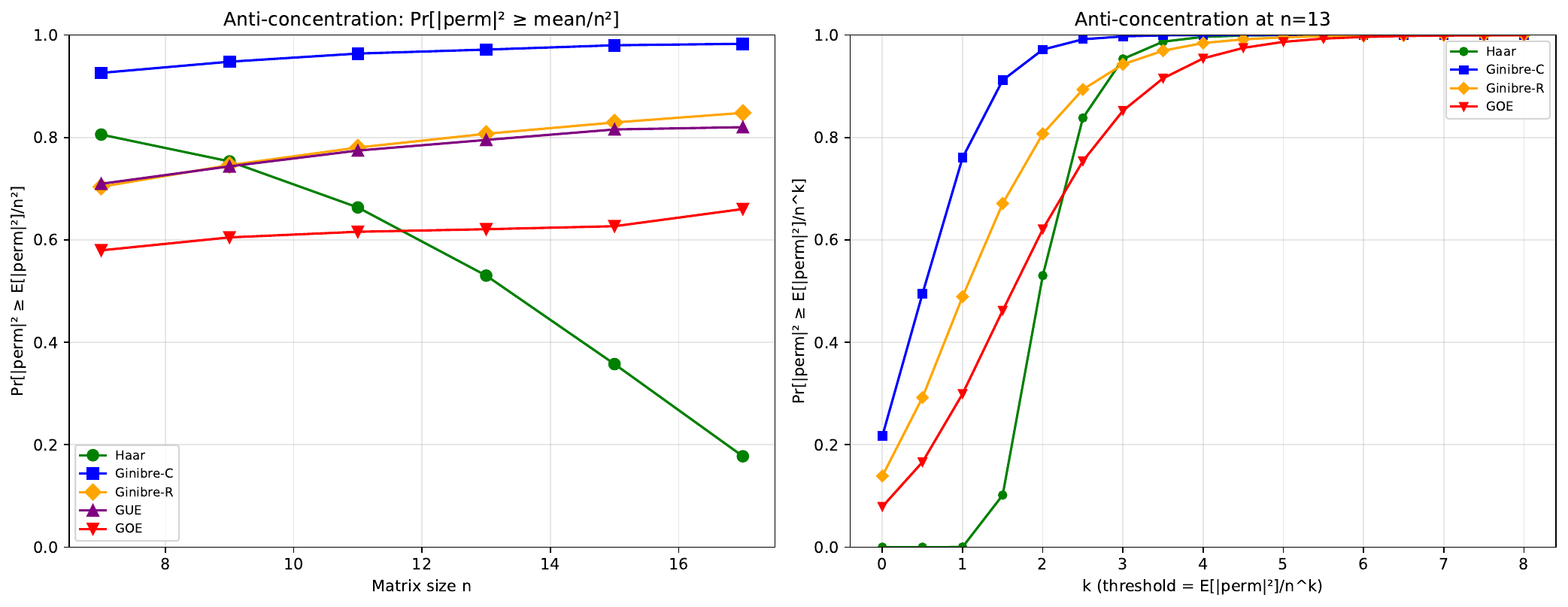}
\caption{Left: $\Pr[|\perm|^2\ge\mathbb{E}[|\perm|^2]/n^2]$ vs.\
matrix size~$n$ for all ensembles.  The Gaussian ensembles
\emph{increase} with~$n$ while Haar unitaries \emph{decrease}.
Right: anti-concentration curve at $n=13$, showing
$\Pr[|\perm|^2\ge\mathbb{E}/n^k]$ as a function of~$k$.}\label{fig:ac}
\end{figure}

\section{Discussion}\label{sec:discussion}

\subsection{Relation to prior work}

The complex Gaussian observation (Observation~\ref{obs:rayleigh})
complements work on the moments and anti-concentration of unitary
permanents.
Nezami~\cite{Nezami21} computed exact moments
$\mathbb{E}[|\perm(U)|^{2k}]$ via representation theory and
established an asymptotic expansion.
The Permanent Anti-Concentration Conjecture (PACC), central to the
hardness of boson sampling, asserts that
$\Pr[|\perm(U)|^2\ge n!/\mathrm{poly}(n)]\ge 1/\mathrm{poly}(n)$
for Haar-random $U$; recent progress includes
\cite{FeffermanGhoshZhan24,KwanSahSawhney24}.
Our complex Gaussian observation identifies the \emph{full
distributional form} of $\perm(U)$, not just tail bounds.  If proven,
it would imply sharp anti-concentration constants.  The $\alpha$-stable
distributions for Gaussian ensembles (Section~\ref{sec:gaussian})
appear to be entirely new and suggest that the Gaussianity of
$\perm(U)$ for Haar unitaries is a delicate consequence of the
boundedness of unitary matrix entries.

The geodesic results (Observations~\ref{obs:universal}--\ref{obs:prime})
appear to be entirely new.  The midpoint formula
$\perm(\gamma(\frac12))=(-1)^{(n-1)/2}\cdot 2e^{-n}(1+\frac{1}{3n}+\cdots)$
is supported by high-precision numerics and likely admits a proof via
saddle-point analysis of the Ryser formula applied to the explicit
circulant~\eqref{eq:circulant}.

\subsection{The universal function $f(t)$}

The function $f(t)=-\frac{1}{n}\ln|\perm(\gamma(t))|$ for the
$n$-cycle geodesic appears to satisfy:
\begin{itemize}
\item $f(t)=f(1-t)$ (from the symmetry $\perm(\gamma(t))=\overline{\perm(\gamma(1-t))}$);
\item $f(t)\sim \frac{\pi^2}{3}t^2$ as $t\to 0^+$;
\item $f(\frac12)=1$ (the $2e^{-n}$ asymptotic).
\end{itemize}
Finding a closed form for $f(t)$ is an attractive open problem.
The initial quadratic $\pi^2 t^2/3$ corresponds to a Gaussian decay
$|\perm(\gamma(t))|\approx e^{-n\pi^2 t^2/3}$ near the identity,
suggesting a connection to the spectral properties of the
skew-Hermitian generator $\log C_n$.

\subsection{Connections to boson sampling}\label{sec:boson}

Our results bear on several aspects of the boson sampling
program~\cite{AaronsonArkhipov13}.

\medskip\noindent\textbf{Porter--Thomas statistics.}
In an $n$-photon linear optical network described by a Haar-random
unitary~$U$, the probability of detecting one photon in each of $n$
specified output modes is $p(T)=|\perm(U_T)|^2/(n_1!\cdots n_m!)$,
where $U_T$ is the corresponding $n\times n$ submatrix.  For the
collision-free case ($n_i\in\{0,1\}$), this reduces to
$p(T)=|\perm(U_T)|^2$.  The same CLT heuristic that drives the full
permanent toward $\mathcal{CN}(0,\sigma^2)$ applies to submatrix
permanents, and the exponential law~\eqref{eq:exp} therefore predicts
that the output probabilities follow Porter--Thomas
statistics~\cite{PorterThomas56}: an exponential distribution with
mean $n!/n^n$ (for $n$-mode submatrices of a Haar-random $n\times n$
unitary).  This has been verified experimentally in photonic boson
sampling demonstrations~\cite{Zhong2020}.

\medskip\noindent\textbf{Anti-concentration.}
The Permanent Anti-Concentration Conjecture (PACC), central to the
hardness of approximate boson
sampling~\cite{AaronsonArkhipov13,FeffermanGhoshZhan24}, asserts that
\[
  \Pr\bigl[|\perm(U)|^2 \ge n!/\mathrm{poly}(n)\bigr]
  \;\ge\; 1/\mathrm{poly}(n).
\]
The exponential form~\eqref{eq:exp} gives the \emph{much} stronger
\[
  \Pr\bigl[|\perm(U)|^2 \ge \sigma^2/\mathrm{poly}(n)\bigr]
  = e^{-1/\mathrm{poly}(n)}
  = 1-1/\mathrm{poly}(n),
\]
so \emph{almost all} Haar-random permanents exceed the PACC threshold.
Proving the complex Gaussian distributional form would therefore
resolve the PACC with room to spare.  Even short of a full proof,
establishing that $\perm(U)$ is sub-Gaussian (which requires only
controlling the moment-generating function) would suffice.

\medskip\noindent\textbf{DFT interferometers.}
The DFT outlier phenomenon (Observation~\ref{obs:dft}) is relevant to
experiments using Fourier multi-port
interferometers~\cite{Crespi2016}, where the large permanent of the
DFT matrix suggests that certain transition amplitudes are anomalously
large compared to generic unitaries.
This complements the suppression laws of
Tichy~\cite{Tichy2014}, which identify \emph{vanishing}
submatrix permanents of the DFT.

\medskip\noindent\textbf{Classical simulation frontier.}
From a computational perspective, the fastest known classical algorithm
for exact boson sampling probabilities requires computing the
permanent in $O(n\,2^n)$ time~\cite{CliffordClifford18,Ryser63}.
Our GPU-accelerated pipeline demonstrates that pushing exact permanent
computation to $n\approx 43$ is now feasible, which overlaps with the
regime of current photonic experiments~\cite{Zhong2020}.

\subsection{Open problems}

\begin{enumerate}
\item Prove that $\perm(U)$ converges in distribution to a
  circularly-symmetric complex Gaussian (after appropriate scaling) as
  $n\to\infty$.  For the orthogonal case, determine the rate at which
  the excess kurtosis of $\perm(O)$ vanishes.
\item Prove the midpoint formula
  $\perm(\gamma(\frac12))=(-1)^{(n-1)/2}\cdot 2e^{-n}(1+\frac{1}{3n}+O(n^{-2}))$.
\item Find a closed form for the universal function~$f(t)$.
\item Explain the prime-vs-composite dichotomy for the DFT geodesic
  (Observation~\ref{obs:prime}).
\item Prove that the permanent of a GOE matrix converges in
  distribution (after centering) to a Cauchy law, and determine
  $\alpha$ analytically for the other Gaussian ensembles.
\item Prove or disprove Aaronson's lognormal conjecture for the
  complex Ginibre ensemble.  For GUE and real Ginibre, our evidence
  suggests the conjecture is false.
\item Extend the table of $\perm(S_n)$ further; $n=45$ is within
  reach ($\sim\!28$ GPU-hours).
\end{enumerate}


\end{document}